\crefname{hypothesis}{Hypothesis}{Hypotheses}
\title{NONLINEAR SOCIAL CONTAGION DYNAMICS ON DYNAMICAL NETWORKS: EXACT SOLUTIONS FOR CONSENSUS TIMES AND EVOLUTIONARY TRAJECTORIES\thanks{Submitted to the editors DATE.
\funding{This work is supported by National Natural Science Foundation of China (NSFC) under Grant No. 61751301.}}}
\author{Xunlong Wang\thanks{School of Science, Beijing University of Posts and Telecommunications, Beijing 100876, China.
\and Key Laboratory of Mathematics and Information Networks (Beijing University of Posts and Telecommunications), Ministry of Education, China (\email{bin.wu@bupt.edu.cn}).}
\and Feng Fu\thanks{Department of Mathematics, Dartmouth College, Hanover, 03755, NH, USA 
  (\email{feng.fu@dartmouth.edu}).}
\and Bin Wu\footnotemark[2]}
\newcommand*{\addFileDependency}[1]{
  \typeout{(#1)}
  \@addtofilelist{#1}
  \IfFileExists{#1}{}{\typeout{No file #1.}}
}
\newcommand*{\myexternaldocument}[1]{%
    \externaldocument{#1}%
    \addFileDependency{#1.tex}%
    \addFileDependency{#1.aux}%
}
\begin{document}

\maketitle

\begin{abstract}
Understanding nonlinear social contagion dynamics on dynamical networks, such as opinion formation, is crucial for gaining new insights into consensus and polarization. 
Similar to threshold-dependent complex contagions, the nonlinearity in adoption rates poses challenges for mean-field approximations. To address this theoretical gap, we focus on nonlinear binary-opinion dynamics on dynamical networks and analytically derive local configurations, specifically the distribution of opinions within any given focal individual's neighborhood.
This exact local configuration of opinions, combined with network degree distributions, allows us to obtain exact solutions for consensus times and evolutionary trajectories. 
Our counterintuitive results reveal that neither biased assimilation (i.e., nonlinear adoption rates) nor preferences in local network rewiring -- such as in-group bias (preferring like-minded individuals) and the Matthew effect (preferring social hubs) -- can significantly slow down consensus. 
Among these three social factors, we find that biased assimilation is the most influential in accelerating consensus. 
Furthermore, our analytical method efficiently and precisely predicts the evolutionary trajectories of adoption curves arising from nonlinear contagion dynamics.
Our work paves the way for enabling analytical predictions for general nonlinear contagion dynamics beyond opinion formation.
\end{abstract}

\begin{keywords}
    stochastic dynamics, dynamical networks, opinion formation, nonlinear dynamics
\end{keywords}

\begin{AMS}
  60J10, 91D30, 05C82
\end{AMS}

\section{Introduction}
Contagion dynamics plays an important role in all aspects of life, ranging from the spreading of infectious diseases \cite{Mason2019NonlinearEpidemic,PRL2006_epidemic_ThiloGross, PRE2022Liuyuan}, the dissemination of rumors \cite{Nature1964Rumour}, the formation of public opinions \cite{PRL2008Whotalking, Socialmedia_empirical} to 
the polarization of elected politicians \cite{PRX_WangXin2020, SIAM2024Mason}.
Statistical methods provide plenty of tools to study contagion dynamics on networks, such as mean-field equations \cite{Mean-fieldBOOK, Gleeson2012Meanfield}.

Exact analytical results for contagion dynamics on networks are challenging.
On one hand, individuals have local network configurations \cite{ER1959, Science1999BAmodel}, which can differ from each other.
This diversity leads to high-dimensional systems, falling into the curse of dimension.
On the other hand, the complexity is enhanced by the non-linearity of spreading rates. Taking epidemic dynamics as an example, this non-linearity refers to that the probability of being infected is a nonlinear function of the number of infected neighbors.

The non-linearity is ubiquitous and non-neglectable.
This happens if individuals are assumed to adopt opinions via the majority principle, i.e., the opinion is adopted as long as it occupies more than half of the neighborhood \cite{PRE2009Qvoter, Liggett1985Major}.
This can also happen if the susceptible spontaneously avoid contact with others for the sake of health, which gives rise to the nonlinear spreading rate in epidemic dynamics \cite{epidemic1986nonlinear,epidemicChaos2021nonlinear}.
Homogeneous pair approximations provide accurate predictions for contagion dynamics with linear spreading rates \cite{NJP2008analytical,Gleeson2012Meanfield,PRE2024QvoterHP,PlosOne2010Bin}.
However, all of these methods can lead to inaccurate and sometimes even wrong predictions for nonlinear contagion dynamics on networks \cite{PRX_2013_SKM,EPL2009HPA, NJP2014HPA,PRE2020PA}.
An alternative approach for nonlinear contagion dynamics is the approximated master equations, which fixes the inaccuracy of pair approximation \cite{PRX_2013_SKM, PRR2020SKM, NSR2017SKM, PRE2013Chenguanrong, SIAM2019Glesson}.
The approximated master equations rely crucially on the likelihood that a susceptible/infected individual has a given number of susceptible neighbors and another given number of infected neighbors.
This likelihood is the full information of almost all the contagion dynamics on networks \cite{PNAS_2012_SKM, Chaos_2016_SKM, PRL_2011_SKM} (except for game interactions \cite{Nature2006SimpleRule}).
Up till now, this likelihood has only been numerically estimated with no analytical results. 
This quantity is even more challenging if the dynamical nature of networks is taken into account.

In this paper, we obtain analytically the likelihood on dynamical networks,
where heterogeneous duration times of the social relationships are taken into account together with the Mathew effect (preferential attachment).
We show the accuracy of the likelihood via a rigors proof.
As an application, we show how the likelihood sheds light on the absorbing time for nonlinear contagion dynamics,
which is crucial in understanding how diseases, behaviors, or information spreads in structured populations \cite{Interface2024Gaoshun,PhysicaA2019clique,Sui2015PREspeed}.

\section{Model}
We propose a binary-state model where contagion dynamics and linking dynamics co-evolve. 
Opinions propagate with probability $\phi$ while the network evolves with probability $1-\phi$.
The network consists of $N$ nodes and $L$ links, and the average degree of the network is $\Bar{k}=2L/N$.
Nodes in the network represent individuals in social networks and links represent social ties.
Each node in the network has an opinion: either $A$ or $B$.
Links are thus of two categories: either homogeneous links ($A-A$ and $B-B$) or heterogeneous links ($A-B$).

\begin{figure*}[tb]
    \centering
    \includegraphics[width=0.99\linewidth]{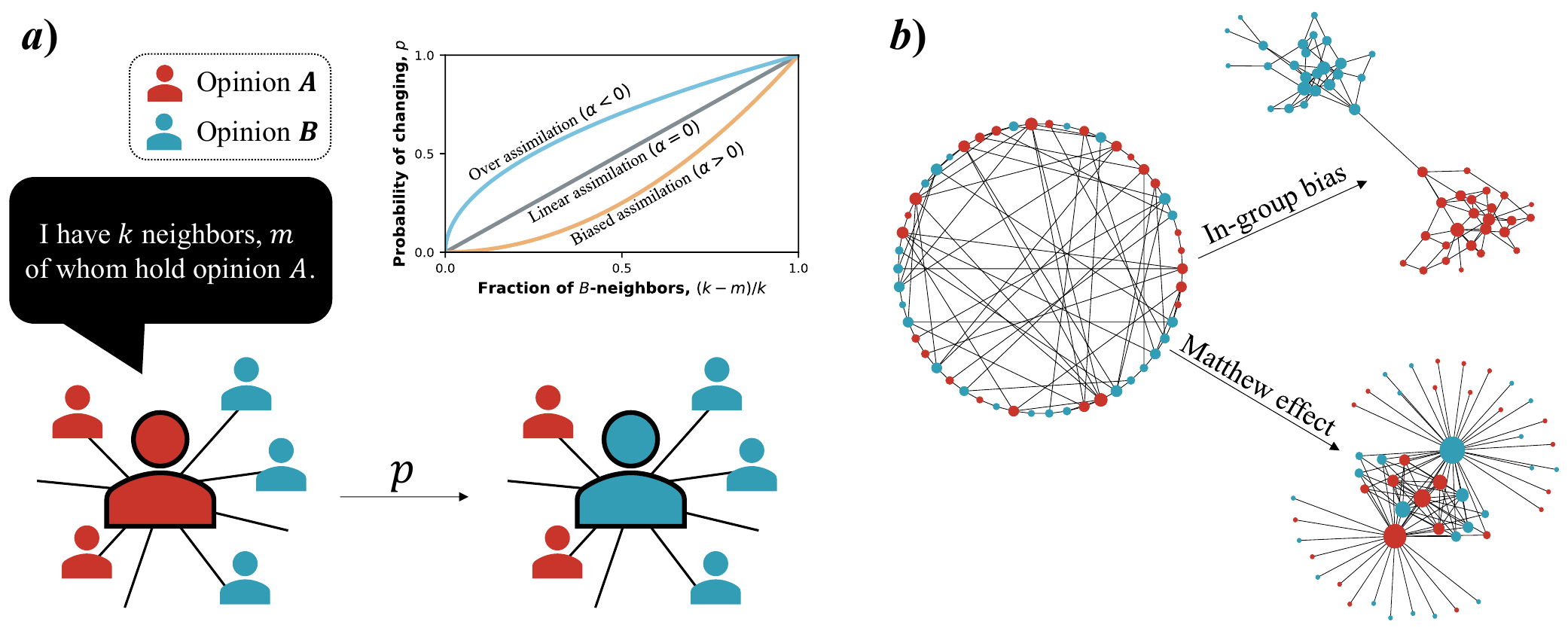}
    \caption{
    Three key features of opinion formation on dynamical networks: (a) the biased assimilation leads to the nonlinear spreading rate $p$, which is related to the opinion, the neighbor size and the number of $A-$neighbors of the focal individual. (b) the in-group bias and the Matthew effect.
    The two opinions are evenly divided and placed randomly in the initial network, whose topology is that $N$ edges join $N$ nodes into a ring and $L-N$ links are placed randomly.
    The in-group bias leads to few interactions between different opinions, leading to the emergent of echo-chamber \cite{PRL2020modelecho}.
    The Matthew effect arising from preferential attachment leads to the emergent of hub nodes, whose degrees are much higher than others, i.e., the opinion leaders.}
    \label{fig: model1}
\end{figure*}

In opinion propagation, a randomly selected individual changes his/her opinion with probability $(m/k)^{1+\alpha}$, if the selected individual has $k$ neighbors, $m$ of whom hold the opposite opinion.
Here $\alpha$ measures the strength of the biased assimilation \cite{PNAS_biasedassimilation,conferencebiased,VVV2019interfaceBiased,VVV2024CSF}.
We assume that individuals have an inclination to disagree when hearing the opposite,
taking into account that individuals prefer their own opinions and undermine opposing opinions \cite{ScienceBiased1974, 1998ConfirmationBiased, Lord1979BiasedAA}.
The assumption is also present in Friedkin-Johnsen model \cite{FJmodel1999} and Hegselmann-Krause model \cite{HKmodel2002}.
If $\alpha=0$, then the opinion propagation degenerates to the voter model \cite{Liggett1985Major}, which follows the linear assimilation as shown in \cref{fig: model1}(a).
Furthermore, non-zero $\alpha$ refers to the nonlinear spreading rates, which has been observed not only in opinion propagation but also in the spreading of disease \cite{NCdeeplearning, NP2019Macroscopic} and the vanishing of languages \cite{Nature2003language, Language2013}.

In the linking dynamics, firstly a link is randomly selected. 
Secondly, the selected link breaks off with probability $k_d$ if it's heterogeneous and with probability $k_s$ otherwise. 
If the selected link doesn't break off, it goes to the first step of the linking dynamics. 
Otherwise, an end of the broken link is chosen randomly and it rewires to an individual $i$ outside the neighborhood with a probability proportional to the power of neighbor size of individual $i$, i.e.,  $k_i^\beta$ with $\beta\ge 0$.
The difference of breaking probabilities $\delta=k_d-k_s\ge 0$ measures the strength of in-group bias \cite{WangXunlongChaos},
i.e., how  individuals prefer to stay away from those with opposite opinions.
For $\delta=0$, the linking dynamics is not adaptive any longer \cite{AdaptiveReview_ThiloGross}.
$\beta\ge 0$ is to measure the strength of the Matthew effect for preferential attachment,
i.e, how individuals are likely to make friends with those with high degrees.
\cite{Matjaz2014Matthew, PhysicaA2015nonlinearattach, Sublinearattachment,superlinearattachment}.
For $\beta=0$, each individual is connected with equal probability provided that the network is not dense $\Bar{k}\ll L$. 
If $\beta>0$, only links whose both ends have more than one neighbor are selected to break off.
It ensures connectivity of the network during the evolutionary process.




\begin{assumption}[Time-scale separation] \label{assumption: time-scale}
    Social relationships evolve much faster than opinions, i.e., $\phi\to 0^+$.
\end{assumption}

The assumption is widely adopted in coevolving network models \cite{PRL2006ArneDynamical, PRE2024WangYakun, PRL2023Liujiazhen, PRL2020modelecho, JTB2022Shanxu, PlosOne2010Bin} and it's of practical significance to opinion propagation in the real world.
In fact, it's hard to change one's attitudes about something important \cite{Timescale1988} in a short time, e.g., faith and religion.

\section{Stationary regime of network topology}

Thanks to \cref{assumption: time-scale}, the network has evolved so many rounds that the network structure has converged to the stationary regime whenever opinions propagate.
In this section, we study the stationary regime of the network topology.

\begin{definition}
    Choose an arbitrary individual with opinion $A$, called Adam. 
     Adam has $X^A(t)$ neighbors, of which $Y^A(t)$ hold opinion $A$ at time $t$.
\end{definition}
For simplicity, it's denoted by $\{X(t), Y(t)\}$. It is a  Markov chain, and captures the exact local configuration in Adam's neighborhood at a given time.
Its  state space is $\mathcal{S}=\left\{(i,j)\in \mathcal{Z}\times\mathcal{Z}|0 \leq i \leq j \leq N-1\right\}$. 
We have $|X(t+1)-X(t)|\le 1$ and $|Y(t+1)-Y(t)|\le 1$, because only one edge can be adjusted at a time. 
\begin{figure}[tbp]
    \centering
    \includegraphics[width=\textwidth]{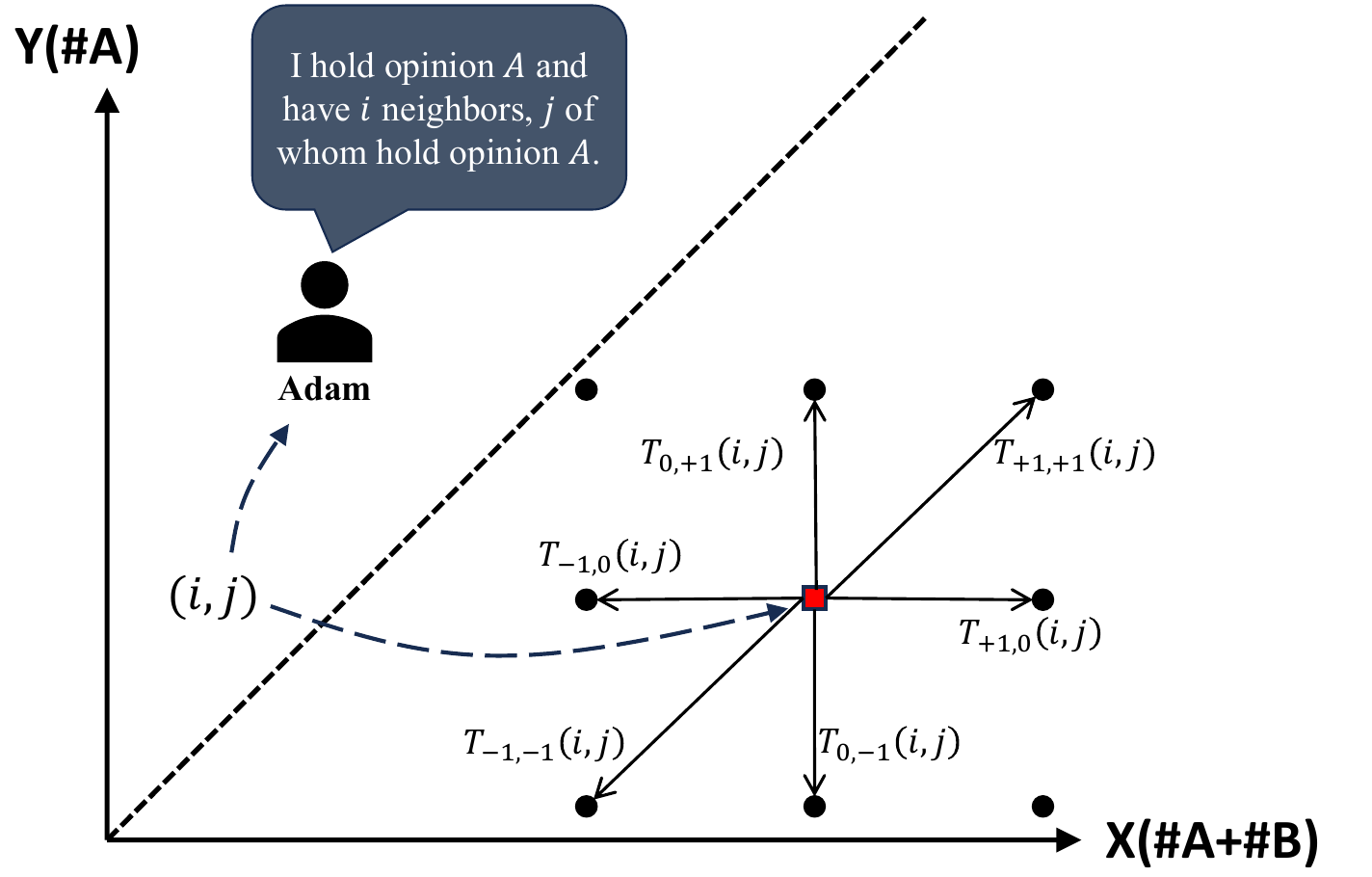}
    \caption{A two-dimensional Markov chain $\{X(t), Y(t)\}$ depicts the evolution of the local network configuration.  
    The coordinate of the red circle is $(i,j)$, which represents Adam has $i$ neighbors consisting of $i-j$ neighbors with opinion $B$ and $j$ neighbors with opinion $A$. 
    Here, the state space is the set of the positive integer coordinates below the dashed line (i.e., the diagonal line).
    Concerning the transition probabilities, it is not likely that Adam has one more $A-$neighbor and simultaneously loses one neighbor in total based on the linking dynamics.
    Thus there are only $6$ transitions from $(i,j)$ to other states.
     For example, the transition $T_{0,+1}(i,j)$ represents the transition from state $(i,j)$ to $(i,j+1)$, which means Adam gains one more neighbor with opinion $A$ and loses a neighbor with opinion $B$. For $\beta=0$, it is found that the underlying Markov chain is ergodic leading to a unique stationary network configuration, which suffices to predict the dynamics of nonlinear opinion formation, ranging from consensus times to evolutionary trajectories. 
   } 
    \label{fig: mp}
\end{figure}
For any given state $(i,j)$ ($0<j<i<N-1$), there are $6$ transitions to neighboring states (as shown in \cref{fig: mp}). 
$T_{n_X,n_Y}(i,j)$ is  denoted as the transition from $(i,j)$ to $(i+n_X,j+n_Y)$, where $X(t+1)-X(t)=n_X$ and $Y(t+1)-Y(t)=n_Y$ ($n_X,n_Y\in\{+1,-1,0\}$).
For example, $T_{+1,+1}(i,j)$ represents the transition from $(i,j)$ to $(i+1,j+1)$, i.e., Adam, who has $j$ neighbors with opinion $A$ and $i-j$ neighbors with opinion $B$, has one more neighbor with opinion $A$ and keeps his neighbors with opinion $B$ unchanged. 
Meanwhile, $P_{n_X,n_Y}(i,j)$ denotes the probability of the transition $T_{n_X,n_Y}(i,j)$, i.e., $P_{n_X,n_Y}(i,j)=\mathbb{P}r[T_{n_X,n_Y}(i,j)]$.
We still take $P_{n_X,n_Y}(i,j)$ as an example.
The transition $T_{+1,+1}(i,j)$ happens in two ways.
One is that, an $A-A$ link that doesn't connect with Adam is selected and breaks off with probability $k_s$, and any end of the broken link connects to Adam.
The other is that an $A-B$ link that doesn't connect with Adam is selected and breaks off with probability $k_d$, and the end with opinion $A$ of the broken link is selected and rewires to Adam.
Analogously, the transition probabilities for $\beta=0$ are given by
\begin{numcases}{}
    P_{+1,+1}(i,j)=\frac{L_{AA}-j}{L}\cdot k_{s}\cdot\frac{1}{N}+\frac{L_{AB}-(i-j)}{L}\cdot k_{d}\cdot\frac{1}{2}\cdot\frac{1}{N},\label{eq: T1}\\
    P_{-1,-1}(i,j)=\frac{j}{L}\cdot k_{s}\cdot\frac{1}{2}\cdot \frac{N-1}{N},\label{eq: T2}\\
    P_{0,+1}(i,j)=\frac{i-j}{L}\cdot k_{d}\cdot\frac{1}{2}\cdot x_A,\label{eq: T3}\\
    P_{0,-1}(i,j)=\frac{j}{L}\cdot k_{s}\cdot \frac{1}{2}\cdot x_B,\label{eq: T4}\\
    P_{+1,0}(i,j)=\frac{L_{AB}-(i-j)}{L}\cdot k_{d}\cdot\frac{1}{2}\cdot\frac{1}{N}+\frac{L_{BB}}{L}\cdot k_{s}\cdot\frac{1}{N},\label{eq: T5}\\
    P_{-1,0}(i,j)=\frac{i-j}{L}\cdot k_{d}\cdot\frac{1}{2}\cdot\frac{N-1}{N}.\label{eq: T6}
\end{numcases}
Here $x_Y$ is the fraction of opinion $Y$ in the whole population (here $x_Y$ remains unchanged based on \cref{assumption: time-scale}) and $L_{XY}$ denotes the number of links of $XY$, where $X, Y\in\{A, B\}$ (the same as below). 

The transition probabilities Eqs.~\eqref{eq: T1}-\eqref{eq: T6} lead to a transition probability matrix $\mathbb{P}$, where each element stores a transition probability and
$\mathbb{P}_{(i,j),(i+n_X,j+n_Y)}=P_{n_X,n_Y}(i,j)$.

\begin{lemma}
    For $\beta=0$, $\{X(t), Y(t)\}$ is homogeneous, irreducible, aperiodic and ergodic.\label{lemma: p4}
\end{lemma}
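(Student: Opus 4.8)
The plan is to verify the four properties in increasing order of difficulty, dispatching homogeneity and aperiodicity as immediate consequences of the model definition and reserving the real effort for irreducibility; ergodicity then follows automatically because $\mathcal{S}$ is finite.

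\emph{Homogeneity.} Under \cref{assumption: time-scale} opinions are frozen while the network relaxes, so the global bookkeeping quantities $L_{AA}$, $L_{AB}$, $L_{BB}$, $x_A$ and $x_B$ entering \eqref{eq: T1}--\eqref{eq: T6} are constants independent of $t$. Hence every $P_{n_X,n_Y}(i,j)$ is a fixed function of the current state $(i,j)$ alone, the matrix $\mathbb{P}$ does not depend on $t$, and the chain is homogeneous by construction.

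\emph{Aperiodicity.} I would exhibit a self-loop. The six transitions of \eqref{eq: T1}--\eqref{eq: T6} account only for the rewiring events that actually touch Adam's neighborhood; with positive probability the selected link either fails to break (probability $1-k_s$ or $1-k_d$), or breaks and rewires between two nodes disjoint from Adam, leaving $(X,Y)$ unchanged. Therefore $P_{0,0}(i,j)=1-\sum_{(n_X,n_Y)\neq(0,0)}P_{n_X,n_Y}(i,j)>0$ for every interior state, which produces a self-loop and forces period $1$.

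\emph{Irreducibility (the crux).} Here I would first read off from \eqref{eq: T1}--\eqref{eq: T6} the sign conditions making each move positive: gaining an $A$-neighbor $(+1,+1)$ needs an $A$--$A$ or $A$--$B$ link avoiding Adam and $i<N-1$; losing an $A$-neighbor $(-1,-1)$ and the swap $A\to B$ given by $(0,-1)$ require $j>0$; gaining a $B$-neighbor $(+1,0)$ needs a spare $B$--$B$ or $A$--$B$ link and $i<N-1$; losing a $B$-neighbor $(-1,0)$ and the swap $B\to A$ given by $(0,+1)$ require $i-j>0$. The key observation is that the admissible increments already include the axis steps $(\pm 1,0)$ and $(0,\pm 1)$, which generate $\mathbb{Z}^2$; consequently, in the interior of the triangle $\mathcal{S}$ every neighboring state is reachable in one positive-probability step and the interior is strongly connected. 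The genuine work lies at the boundaries $j=0$, $j=i$, $i=0$, $i=N-1$, where some of these moves degenerate. There I would argue that at least one inward-pointing transition survives: from $j=0$ one can still gain an $A$-neighbor via $(+1,+1)$ or swap $B\to A$ via $(0,+1)$; from $j=i$ one can gain a $B$-neighbor via $(+1,0)$; from $i=N-1$ one can lose a neighbor via $(-1,-1)$ or $(-1,0)$. Hence every boundary state communicates with the interior, and $\mathcal{S}$ is a single communicating class. I expect this boundary bookkeeping — in particular checking that the finite link budget never isolates a corner of $\mathcal{S}$ — to be the only delicate point.

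\emph{Ergodicity.} Since $\mathcal{S}$ is finite (as $0\le j\le i\le N-1$), an irreducible aperiodic chain on it is automatically positive recurrent, hence ergodic with a unique stationary distribution, completing the proof.
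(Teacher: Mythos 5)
Your proposal is correct and follows essentially the same route as the paper: homogeneity from the time-independence of the transition probabilities under \cref{assumption: time-scale}, aperiodicity from the self-loop created when a selected link fails to break, irreducibility by chaining elementary one-step moves, and ergodicity from irreducibility plus aperiodicity on the finite state space $\mathcal{S}$. Your boundary bookkeeping for irreducibility is somewhat more careful than the paper's (which simply asserts a path of gain-$B$/lose-$A$ moves ``without loss of generality''), but it is a refinement of, not a departure from, the same argument.
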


\begin{proof}
    We will prove these one by one:
    \begin{enumerate}
    \item \textbf{Homogeneous.} The $6$ transitions (as shown in \cref{fig: mp}) of state $(i,j)$ only depend on the current state of itself (i.e., $(i,j)$), but is independent of time, i.e., it's a time-homogeneous Markov process.
    \item \textbf{Irreducible.} 
    For any given two states $(i_1,j_1),(i_2,j_2)\in \mathcal{S}$ (without loss of generality, let $i_1<i_2$ and $j_1>j_2$), if Adam loses $j_1-j_2$ neighbors with opinion $A$ and gains $i_2-j_2-i_1+j_1$ neighbors with opinion $B$, the Markov chain transfer from $(i_1,j_1)$ to $(i_2,j_2)$. 
    Since the probability that Adam gains a neighbor with opinion $B$ or loses a neighbor with opinion $A$ is non-zero, the transition from $(i_1,j_1)$ to $(i_2,j_2)$ can happen in $i_2-2j_2-i_1+2j_1$ steps with a non-zero probability, i.e., $(i_2,j_2)$ is reachable from $(i_1,j_1)$.
    Thus, it's an irreducible Markov chain.
    \item \textbf{Aperiodic.} If a link in the network is selected but fails to break off, Adam's surrounding configuration remains unchanged, i.e., $\{X(t+1),Y(t+1)\}=\{X(t),Y(t)\}$. It can happen with a non-zero probability no matter what the current state is, so the period of each state is $1$.
    Thus it's an aperiodic Markov chain.
    \item \textbf{Ergodic.} Because it's irreducible and aperiodic, it's an ergodic Markov chain.
\end{enumerate}
\end{proof}

A homogeneous, ergodic Markov chain on a finite state space has a unique stationary distribution, which is equal to the limiting distribution \cite{Gardiner1986HandbookOStochastic}.
Based on \cref{lemma: p4}, $\{X(t), Y(t)\}$ on finite state space $\mathcal{S}$ for $\beta=0$ has a unique stationary distribution, which is also the limiting distribution.

\begin{definition}
    Denote the stationary distribution of $\{X(t), Y(t)\}$ by $\pi_{k,m}^A$. \label{def: pi}
\end{definition}

It's usually obtained by solving the left eigenvector of the transition probability matrix $\mathbb{P}$, i.e., $\pi_{k,m}^A=\sum\limits_{\{i,j\}\in\mathcal{S}}\pi_{i,j}^A \mathbb{P}_{\{i,j\},\{k,m\}}$, $\forall\{k,m\}\in\mathcal{S}$. 
However, it's too challenging to solve it because the transition probability matrix $\mathbb{P}$ is too large, which is of $\mathcal{O}(N^2)$. 

\begin{lemma}[Ref.~\cite{PlosOne2010Bin}]\label{lemma: plos}
    For $\beta=0$ and large $N$ (i.e., $N\to\infty$), after a long time the proportions of different kinds of links(i.e., $L_{XY}/L$) converge to a stationary state, i.e., 
    \begin{equation}
        (\frac{L_{AA}}{L},\frac{L_{AB}}{L},\frac{L_{BB}}{L})=(\frac{x_A^2}{k_{s}},\frac{2x_Ax_B}{k_{d}},\frac{x_B^2}{k_{s}})\mathcal{N}(x_A),\label{eq: plos}
    \end{equation}
    where $\mathcal{N}(x_A)=(\frac{x_A^2}{k_{s}}+\frac{2x_Ax_B}{k_{d}}+\frac{x_B^2}{k_{s}})^{-1}$ is a normalization factor. 
\end{lemma}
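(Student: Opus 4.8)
The plan is to track the expected one-step change of the link-type counts $L_{AA},L_{AB},L_{BB}$ under the linking dynamics and to identify the stationary proportions with the fixed point of the resulting mean-field flow. Because a selected link that fails to break leaves every count unchanged, it suffices to condition on a successful breaking-and-rewiring event. The decisive simplification in the regime $N\to\infty$ with $\beta=0$ is that a rewired endpoint attaches to a uniformly chosen non-neighbor, whose opinion is $A$ with probability $x_A+O(\bar{k}/N)\to x_A$ and $B$ with probability $x_B+O(\bar{k}/N)\to x_B$; this closes the dynamics at the level of the three counts alone, with $L=L_{AA}+L_{AB}+L_{BB}$ conserved.

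First I would enumerate the gain–loss contributions. If an $A\!-\!A$ link is selected (probability $L_{AA}/L$) it breaks with probability $k_s$, and the retained $A$-endpoint re-forms an $A\!-\!A$ link with probability $x_A$ or an $A\!-\!B$ link with probability $x_B$; this contributes $-k_s\frac{L_{AA}}{L}x_B$ to $\mathbb{E}[\Delta L_{AA}]$ and $+k_s\frac{L_{AA}}{L}x_B$ to $\mathbb{E}[\Delta L_{AB}]$, with the symmetric statement for $B\!-\!B$ links. If an $A\!-\!B$ link is selected (probability $L_{AB}/L$) it breaks with probability $k_d$, either endpoint is kept with probability $1/2$, and rewiring yields the appropriate homogeneous or heterogeneous link; collecting terms gives the contribution $\tfrac12 k_d\frac{L_{AB}}{L}x_A$ to $\mathbb{E}[\Delta L_{AA}]$, and analogously for $L_{BB}$. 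Writing $\ell_{XY}=L_{XY}/L$ and $\ell_{AB}=1-\ell_{AA}-\ell_{BB}$, the mean-field equations for the homogeneous fractions read
\begin{align}
\dot{\ell}_{AA} &= -k_s x_B\,\ell_{AA} + \tfrac{1}{2} k_d x_A\,\ell_{AB},\\
\dot{\ell}_{BB} &= -k_s x_A\,\ell_{BB} + \tfrac{1}{2} k_d x_B\,\ell_{AB},
\end{align}
and the $L_{AB}$-equation is redundant, since the three increments sum identically to zero once $x_A+x_B=1$ is used, reflecting conservation of $L$.

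Setting the right-hand sides to zero gives the balance relations $k_s x_B\,\ell_{AA}=\tfrac12 k_d x_A\,\ell_{AB}$ and $k_s x_A\,\ell_{BB}=\tfrac12 k_d x_B\,\ell_{AB}$. Substituting the claimed values $\ell_{AA}=x_A^2\mathcal{N}(x_A)/k_s$, $\ell_{AB}=2x_Ax_B\mathcal{N}(x_A)/k_d$, $\ell_{BB}=x_B^2\mathcal{N}(x_A)/k_s$ verifies both relations, and imposing $\ell_{AA}+\ell_{AB}+\ell_{BB}=1$ fixes $\mathcal{N}(x_A)$ to the stated normalization; hence \eqref{eq: plos} is the unique fixed point in the simplex. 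To upgrade this to genuine convergence, I would write the system as $\dot{\mathbf v}=M\mathbf v+\mathbf f$ in the variables $(\ell_{AA},\ell_{BB})$ and check that $M$ has strictly negative trace $-(k_s+\tfrac12 k_d)$ and strictly positive determinant $\det M = k_s^2 x_A x_B+\tfrac12 k_s k_d (x_A^2+x_B^2)>0$, so both eigenvalues have negative real part and the affine flow is globally asymptotically stable toward \eqref{eq: plos}.

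The main obstacle is not this linear-stability computation but the justification of the mean-field reduction itself: for large but finite $N$ one must show that the stochastic link-count process concentrates around the deterministic trajectory, and that the $O(\bar{k}/N)$ discrepancy between the opinion fraction among eligible rewiring targets and the global fraction $x_A$ is negligible on the relevant time scale. This can be handled by a density-dependent Markov chain (Kurtz-type) argument controlling the fluctuations; alternatively, since \cref{lemma: p4} already establishes ergodicity of the configuration chain for $\beta=0$, the expected link proportions converge to a unique stationary value, and it only remains to match that value to \eqref{eq: plos} via the balance relations above.
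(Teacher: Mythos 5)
Your derivation is correct: the gain--loss bookkeeping for the three link types is right, the balance relations $k_s x_B\,\ell_{AA}=\tfrac12 k_d x_A\,\ell_{AB}$ and $k_s x_A\,\ell_{BB}=\tfrac12 k_d x_B\,\ell_{AB}$ are exactly what the claimed proportions satisfy, and your trace/determinant computation ($\operatorname{tr}M=-(k_s+\tfrac12 k_d)$, $\det M=k_s^2x_Ax_B+\tfrac12 k_sk_d(x_A^2+x_B^2)>0$) does establish global attraction to the unique fixed point of the mean-field flow. The paper itself offers no argument here --- it defers entirely to Ref.~\cite{PlosOne2010Bin} --- and the derivation in that reference is precisely this rate-equation analysis of the link densities, so you have in effect reconstructed the intended proof rather than found a new one; the added value of your write-up is that you make the fixed point \emph{and} its stability explicit, and that you honestly flag the one genuine gap, namely that convergence of the stochastic link counts (as opposed to the deterministic flow) requires a concentration / density-dependent Markov chain argument in the $N\to\infty$ limit, which neither the paper nor your proposal carries out. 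One small correction: your fallback appeal to \cref{lemma: p4} does not quite work as stated, because that lemma establishes ergodicity of the \emph{local} configuration chain $\{X(t),Y(t)\}$ of a single focal individual, not of the global link-count process $(L_{AA},L_{AB},L_{BB})$; the latter is only approximately Markov (the rewiring target's opinion depends on the neighborhood of the rewiring node, equal to $x_A$ only up to $O(\bar k/N)$), so if you want an ergodicity route you would need to argue it for the global chain directly.
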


\begin{proof}
    The detailed proof is shown in Ref.~\cite{PlosOne2010Bin}.
\end{proof}

\begin{theorem}
    For $\beta=0$ and large $N$, the unique stationary distribution of $\{X(t), Y(t)\}$ is
    \begin{equation}
        \pi_{k,m}^A=\frac{\lambda_A^k}{k!}e^{-\lambda_A}\cdot\tbinom{k}{m}p_A^m(1-p_A)^{k-m}, \label{eq: piAkm1}
    \end{equation}
    where $\lambda_A=\frac{2L_{AA}+L_{AB}}{Nx_A}$ and $p_A=\frac{2L_{AA}}{2L_{AA}+L_{AB}}$.
    Besides, it satisfies the detailed balance condition.\label{theorem: piAkm}
\end{theorem}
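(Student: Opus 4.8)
The plan is to exploit the \emph{detailed balance} structure asserted in the statement rather than attempting to solve the $\mathcal{O}(N^2)$-dimensional left-eigenvector problem directly. The first move is a change of coordinates that makes the claimed distribution transparent: writing $u=j$ for the number of $A$-neighbors and $v=i-j$ for the number of $B$-neighbors, the product in \eqref{eq: piAkm1} factorizes as
\begin{equation*}
\pi_{k,m}^A=e^{-\lambda_A}\,\frac{a^{u}}{u!}\,\frac{b^{v}}{v!},\qquad a:=\lambda_A p_A=\frac{2L_{AA}}{Nx_A},\quad b:=\lambda_A(1-p_A)=\frac{L_{AB}}{Nx_A},
\end{equation*}
i.e. the $A$- and $B$-neighbor counts are, in equilibrium, two \emph{independent} Poisson variables with means $a$ and $b$; since $a+b=\lambda_A$, the normalization $\sum_{u,v\ge0}\pi=e^{-\lambda_A}e^{a+b}=1$ is automatic. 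In these coordinates the six moves of \cref{fig: mp} become the three reversible pairs gain/lose-$A$ (the edge $(u,v)\leftrightarrow(u+1,v)$), gain/lose-$B$ (the edge $(u,v)\leftrightarrow(u,v+1)$), and the $B\!\to\!A$/$A\!\to\!B$ swap (the anti-diagonal edge $(u,v)\leftrightarrow(u+1,v-1)$). Because every pair of adjacent states is joined by exactly one such edge, verifying flux balance on each edge type establishes detailed balance globally, and hence stationarity.

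Next I would rewrite the transition probabilities \eqref{eq: T1}--\eqref{eq: T6} in the $(u,v)$ variables, substitute the stationary link proportions of \cref{lemma: plos} (so that $\tfrac{L_{AA}}{L}k_s=x_A^2\mathcal{N}$, $\tfrac{L_{AB}}{L}k_d=2x_Ax_B\mathcal{N}$, $\tfrac{L_{BB}}{L}k_s=x_B^2\mathcal{N}$), and take the $N\to\infty$ limit in which the $O(u/L),O(v/L)$ corrections inside the two ``gain'' rates and the factor $\tfrac{N-1}{N}$ in the two ``lose'' rates are negligible. This collapses the six rates to $P_{+1,+1}=x_A\mathcal{N}/N$, $P_{-1,-1}=u k_s/(2L)$, $P_{0,+1}=v k_d x_A/(2L)$, $P_{0,-1}=u k_s x_B/(2L)$, $P_{+1,0}=x_B\mathcal{N}/N$, and $P_{-1,0}=v k_d/(2L)$. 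Using $\pi_{u+1,v}^A/\pi_{u,v}^A=a/(u+1)$ and $\pi_{u,v+1}^A/\pi_{u,v}^A=b/(v+1)$, the three edge-balance equations reduce, after the state-dependent factors cancel, to the three scalar conditions $a=2Lx_A\mathcal{N}/(k_s N)$, $b=2Lx_B\mathcal{N}/(k_d N)$, and $a/b=k_dx_A/(k_sx_B)$.

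The crux is that these are three constraints on only two unknowns $(a,b)$, so stationarity of a product-Poisson law is a priori over-determined; the content of the theorem is that the system is nonetheless consistent. Here \cref{lemma: plos} does the work: the first two conditions fix $a$ and $b$, and the third is then automatically satisfied because $a/b=(x_A/k_s)/(x_B/k_d)$. Moreover the resulting values coincide with $\lambda_A p_A$ and $\lambda_A(1-p_A)$ once the link proportions are re-expanded, which is exactly the claimed distribution. The boundary $u=0$ or $v=0$ needs no separate treatment, since the ``lose'' and ``swap-away'' rates carry the prefactors $u$ or $v$ and therefore vanish there, making both sides of the relevant balance equation zero. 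Finally, because \cref{lemma: p4} guarantees the chain is ergodic on a finite state space, its stationary distribution is unique, so the $\pi$ satisfying detailed balance \emph{is} the stationary distribution, proving both assertions of \cref{theorem: piAkm}. I expect the only real obstacle to be bookkeeping: correctly mapping the six transitions and their reversals onto the $(u,v)$ lattice and controlling which terms survive the large-$N$ limit, since a sign slip in identifying the reverse of a diagonal swap would break the over-determined consistency that carries the proof.
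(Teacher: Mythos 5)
Your proposal is correct and follows essentially the same route as the paper: simplify the six transition rates via \cref{lemma: plos} in the large-$N$ limit, verify detailed balance on the three reversible transition pairs for the guessed $\pi_{k,m}^A$, and invoke the ergodicity of \cref{lemma: p4} for uniqueness. Your reparametrization into independent Poisson counts $(u,v)$ with means $a=\lambda_Ap_A$ and $b=\lambda_A(1-p_A)$ is merely a cleaner bookkeeping of the same detailed-balance check (and, incidentally, your rate $P_{0,+1}=vk_dx_A/(2L)$ is the one consistent with Eq.~\eqref{eq: T3}, whereas the paper's proof writes $x_B$ there, an apparent typo).
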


\begin{proof}
    Large $N$ means the number of $X-$neighbors of Adam is much smaller than the total number of $A-X$ links in the population, i.e., $j\ll L_{AA}$ and $i-j\ll L_{AB}$, where $i$ and $j$ represent the number of Adam's neighbors and Adam's $A-$neighbors.
    And because of \cref{lemma: plos}, the transition probabilities from state $(i,j)$ become 
    \begin{numcases}{}
        P_{+1,+1}(i,j)=\frac{x_A}{N}\mathcal{N}(x_A),\nonumber\\
        P_{-1,-1}(i,j)=\frac{k_{s}}{2L}j,\nonumber\\
        P_{0,+1}(i,j)=\frac{k_{d}\cdot x_B}{2L}(i-j),\nonumber\\
        P_{0,-1}(i,j)=\frac{k_{s}\cdot x_B}{2L}j,\nonumber\\
        P_{+1,0}(i,j)=\frac{x_B}{N}\mathcal{N}(x_A),\nonumber\\
        P_{-1,0}(i,j)=\frac{k_{d}}{2L}(i-j).\nonumber
    \end{numcases}
    Let us denote that $\pi_{k,m}^A=\frac{\lambda_A^k}{k!}e^{-\lambda_A}\cdot\tbinom{k}{m}p_A^m(1-p_A)^{k-m}$. It is shown that
    \begin{numcases}{}
        \pi_{k,m}^A\cdot P_{+1,+1}(k,m)=\pi_{k+1,m+1}^A\cdot P_{-1,-1}(k+1,m+1),\label{eq: db1}\\
        \pi_{k,m}^A\cdot P_{0,+1}(k,m)=\pi_{k,m+1}^A\cdot P_{0,-1}(k,m+1),\label{eq: db2}\\
        \pi_{k,m}^A\cdot P_{+1,0}(k,m)=\pi_{k+1,m}^A\cdot P_{-1,0}(k+1,m),\label{eq: db3}
    \end{numcases}
    hold. In other words, it satisfies the detailed balance condition \cite{Gardiner1986HandbookOStochastic}.
    And the detailed balance is a sufficient condition for the stationary distribution, thus Eq.~\eqref{eq: piAkm1} is the stationary distribution of $\{X(t), Y(t)\}$.
    And because \cref{lemma: p4} leads to the uniqueness of the stationary distribution, Eq.~\eqref{eq: piAkm1} is the unique stationary distribution of $\{X(t), Y(t)\}$, and satisfies the detailed balance condition.
\end{proof}

\cref{theorem: piAkm} leads naturally to the following properties:
\begin{enumerate}
    \item For $\beta=0$ and large $N$, the marginal distribution of the stationary (limiting) distribution is the Poisson distribution, i.e., $\mathbb{P}r[X(\infty)=k]=\lambda_A^ke^{-\lambda_A}/k!$.
    \item For $\beta=0$ and large $N$, if $\{X(t)\}$ is given, then the stationary distribution of $\{Y(t)\}$ is the Binomial distribution, i.e.,
    $\mathbb{P}r[Y(\infty)|X(\infty)=k]=\tbinom{k}{m}p_A^m(1-p_A)^{k-m}$.
    \item For $\beta=0$ and large $N$, the proportion of $A-$individuals who have $m$ neighbors with opinion $A$ and $k-m$ neighbors with opinion $B$ among all $A-$individuals in the networks converges to $\pi_{k,m}^A$.
    \item For $\beta=0$ and large $N$, the proportion of $B-$individuals who have $m$ neighbors with opinion $A$ and $k-m$ neighbors with opinion $B$ among all $B-$individuals in the networks converges to
    \begin{equation}
        \pi_{k,m}^B=\frac{\lambda_B^k}{k!}e^{-\lambda_B}\tbinom{k}{m}p_B^m(1-p_B)^{k-m},\label{eq: piBkm1}
    \end{equation}
    where $\lambda_B=(2L_{BB}+L_{AB})/Nx_B$ and $p_B=L_{AB}/(2L_{BB}+L_{AB})$.\label{theorem: piBkm}
\end{enumerate}

\begin{remark}
    For $\beta=0$ and large $N$, the stationary distribution of $\{Y(t)\}$ conditional on that $\{X(t)\}$ is given is the stationary distribution of $\{Y(t)|X(t)\equiv k\}$, i.e.,
    $\mathbb{P}r[Y(\infty)|X(\infty)=k]=\mathbb{P}r[Y(\infty)|X(\infty)\equiv k]$.\label{remark: binomial}
\end{remark}

\begin{proof}
    Let us focus on $\{Y(t)|X(t)\equiv k\}$, which is a homogeneous, ergodic, one-dimensional Markov chain on a finite state space $\{0,1,\dots, k\}$, so it has a unique stationary distribution.
    Noting that the Markov chain only transfers to the neighboring states or stays unchanged in one step, its stationary distribution satisfies the detailed balance condition, i.e., Eq.~\eqref{eq: db2}.
    It implies the solution of Eq.~\eqref{eq: db2} is unique.
    And because $\pi_{k,m}^A$ also satisfies Eq.~\eqref{eq: db2} no matter what $k$ is, the stationary distribution of $\{Y(t)|X(t)\equiv k\}$ is equal to the stationary distribution $\pi[Y(t)|X(t)=k]$.

    It's easy to prove that the stationary (limiting) distribution of $\{Y(t)|X(t)\equiv k\}$ is the Binomial distribution, i.e.,
    \begin{equation}
        \mathbb{P}r[Y(\infty)=m|X(t)\equiv k]=\tbinom{k}{m}p_A^m(1-p_A)^{k-m}.\nonumber
    \end{equation}
    Thus $\mathbb{P}r[Y(\infty)|X(\infty)=k]=\tbinom{k}{m}p_A^m(1-p_A)^{k-m}$ is proven in another way.
\end{proof}

The stationary regime $\pi_{k,m}^X$ is the product of the degree distribution and opinion distribution in the neighborhood with a given neighbor size (see \cref{fig: topology1}).
On one hand, the degree distribution of the stationary regime follows the Poisson distribution whose average is $\lambda_X$.
The result not only covers the average degree of $X-$individuals, $\lambda_X$, but also provides the exact degree distribution.
On the other hand, opinion distribution follows the Binomial distribution.
This indicates for $\beta=0$ (i.e., rewire-to-random), the opinions of any two neighbors of a given individual are independent.
This is because the Binomial distribution is essentially the sum of independent Bernoulli random variables.
It also implies that the opinion of any neighbor of a given individual and its neighbor size are independent.
For in-group bias (i.e., $\delta=k_d-k_s>0$), then $p_A>x_A>p_B$, i.e., the clustering of individuals with the same opinion.
For large $\delta$, the links between individuals with different opinions are very few while most links connect nodes with the same opinion as shown in \cref{fig: model1}(b), i.e., mirroring the echo chamber effect in sociology \cite{PRX_WangXin2020, PNASnexus2024FuFeng, PRL2020modelecho, PNAS2021Echosocialmedia}.

\begin{figure*}[tb]
    \centering
    \includegraphics[width=0.99\linewidth]{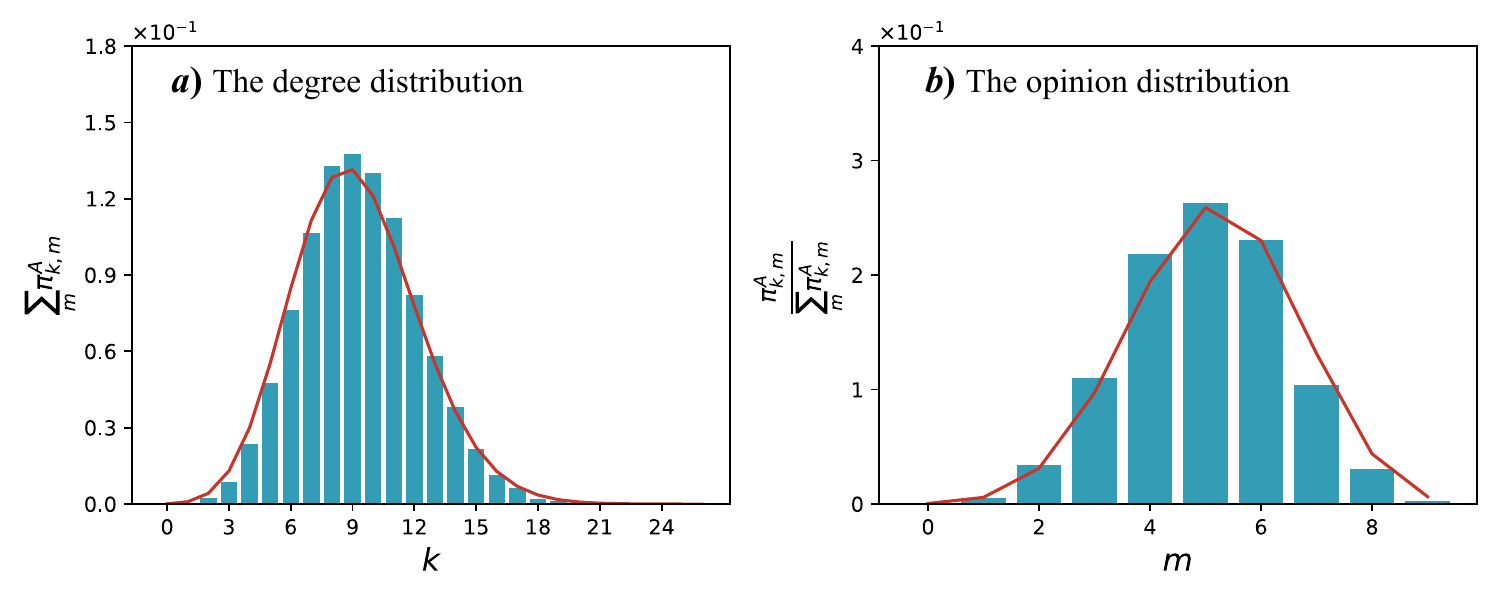}
    \caption{The theoretical $\pi_{k,m}^A$ precisely captures the exact stationary regime of the local network configuration. It can be divided into two parts: (a) the degree distribution of opinion $A$ and (b) the opinion distribution in an $A-$individual's neighborhood with a given size (here $k=9$).
    The solid lines are the theoretical solutions: (a) the Poisson distribution and (b) the Binomial distribution.
    The bars are the simulation results by averaging $100$ independent runs. 
    Here $N=100$, $L=500$, $x_A=0.4$, $\beta=0$, $k_d=1$ and $k_s=0.5$.}
    \label{fig: topology1}
\end{figure*}

\section{Consensus time}

\begin{definition}
    Denote $t_{i}(\alpha^{\prime},\delta^{\prime},\beta^{\prime})$ that the time to reach consensus starting from state in which there are $i$ individuals with opinion $A$ under $\alpha=\alpha^{\prime}$, $\delta=\delta^{\prime}$ and $\beta=\beta^{\prime}$.
    Here an opinion updating is taken as one step.
\end{definition}

To show how powerful the local network configuration $\pi_{k,m}^A$ on nonlinear contagion dynamics is,
we focus on the consensus time (absorbing time) $t_{N/2}$, i.e., how long the population reaches consensus starting from half-and-half.
It fosters the understanding how quickly a group reaches agreement, and provides insights into  decision-making processes.
We try to investigate how $t_{N/2}(\alpha,\delta,\beta)$ is influenced by the biased assimilation $\alpha$, the in-group bias $\delta$ and the Matthew effect $\beta$.
In particular, we try to figure out which of the three affects consensus time most strongly.

To this end, the  consensus time is expanded around $(0,0,0)$, i.e.,  
$t_{N/2}(0,0,0)\!+\!\sum\limits_{\sigma\in
    \{ \alpha,\delta,\beta\}}\sigma\cdot\partial_\gamma t_{N/2}(0,0,0)
    \!+\! o(\sqrt{\alpha^2\!+\!\delta^2\!+\!\beta^2}).
$
Here $\partial_\sigma t_{N/2}(0,0,0)$ measures the impact of weak biased assimilation \cite{VVV2019interfaceBiased}, weak in-group bias \cite{WangXunlongChaos, PRE2016Collective} and weak Matthew effect \cite{PRL2005VoterHeterogeneous, PRE2008VoterHeterogeneous} on consensus time.
The sign represents whether it speeds up consensus.
The magnitude represents the intensity of the impact.

\begin{definition}
    Denote by $\{Z(t)\}$ that there are $Z$ individuals with opinion $A$ at time $t$.
    A single time increment represents the occurrence of an opinion propagation.
\end{definition}

It's a homogeneous Markov chain in a finite state space $\mathcal{E}=\{0,1,2,\dots, N\}$.
For $\beta=0$, the transition probability from state $j$ to $j+1$ is exactly given by
\begin{align}
    T_j^+\!=\!x_B\!\cdot\!\sum\limits_{k=0}^{N-1} \sum\limits_{m=0}^{k}\pi^B_{k,m}\!\cdot\!\left(\frac{m}{k}\right)^{1+\alpha},j=1,2,3,\dots\label{eq: noT1+}
\end{align}
Similarly, the transition probability from state $j$ to $j-1$ is given by
\begin{align}
    T_j^-\!=\! x_A\!\cdot\!\sum\limits_{k=0}^{\infty} \sum\limits_{m=0}^{k}\pi^A_{k,m}\!\cdot\!\left(\frac{k-m}{k}\right)^{1+\alpha},j=1,2,3,\dots\label{eq: noT1-}
\end{align}
Here $x_A=j/N$ and $x_B=(N-j)/N$.
The sums over $k$ from $1$ to $N-1$ in Eqs.~\cref{eq: noT1+,eq: noT1-} are approximated by the sums from $0$ to $\infty$, because $N$ is sufficiently large and the fraction of isolated nodes $e^{-\Bar{k}}$ is relatively small.

The state $0\in\mathcal{E}$ denotes that there aren't any individuals holding opinion $A$ in the population, i.e., the consensus of opinion $B$. 
It's an absorbing state where $\mathbb{P}r[0\to i]=\mathbbm{1}_{\{i=0\}}$. Once the Markov chain $\{Z(t)\}$ reaches the absorbing state, it will no longer move to other states.
Similarly, the state $N\in\mathcal{E}$ is an absorbing state too, which denotes the consensus formation of opinion $A$. Thus the Markov chain $\{Z(t)\}$ is a reducible Markov chain and it reaches one of the two absorbing states after a sufficiently long time, i.e., $Z(\infty)\in\{0, N\}$. 
Thus the consensus time is the absorbing time of underlying Markov chain $\{Z(t)\}$, i.e., $t_j=\min \{t:Z(t)=0\vee N, Z(0)=j\}$.

\begin{proposition}[Consensus time]\label{prop: time}
    If $Z(0)=N/2$, the consensus (absorbing) time $t_{N/2}$ is equivalently given by the following three forms,
\begin{equation}
    -t_1\cdot\sum\limits_{k=N/2}^{N-1}\prod\limits_{m=1}^k\gamma_m+\sum\limits_{k=N/2}^{N-1}\sum\limits_{l=1}^k\frac{1}{T_l^+}\prod\limits_{m=l+1}^k\gamma_m\label{eq: t11}
\end{equation}
\begin{equation}
        \sum\limits_{k=N/2}^{N-1}\sum\limits_{l=1}^k\frac{1}{T_l^+}\prod\limits_{m=l+1}^k\gamma_m-\frac{1}{2}\sum\limits_{k=1}^{N-1}\sum\limits_{l=1}^k\frac{1}{T_l^+}\prod\limits_{m=l+1}^k\gamma_m\label{eq: t22}
    \end{equation}
    \begin{equation}
        \frac{1}{2}\!\left[\sum\limits_{l=1}^{N/2}\frac{1}{T_l^+}\left(\sum\limits_{k=N/2}^{N-1}\prod\limits_{m=l+1}^k \gamma_m\right)\!+\!\sum\limits_{l=N/2+1}^{N-1}\frac{1}{T_l^+}\left(\sum\limits_{k=l}^{N-1}\prod\limits_{m=l+1}^k \gamma_m\right)\!-\!\sum\limits_{l=1}^{N/2}\frac{1}{T_l^+}\left(\sum\limits_{k=l}^{N/2}\prod\limits_{m=l+1}^k \gamma_m\right)\!\right]\label{eq: t33}
    \end{equation}
    where $\gamma_j=T_j^-/T_j^+$ and $t_1=\left(1+\sum\limits_{k=1}^{N-1}\prod\limits_{j=1}^k\gamma_j\right)^{-1}\sum\limits_{k=1}^{N-1}\sum\limits_{l=1}^k\frac{1}{T_l^+}\prod\limits_{j=l+1}^k\gamma_j$.
\end{proposition}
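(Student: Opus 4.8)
The plan is to treat $\{Z(t)\}$ as a one-dimensional birth--death chain on $\mathcal{E}=\{0,1,\dots,N\}$ with absorbing boundaries at $0$ and $N$, and to obtain $t_{N/2}$ as an expected absorption time via first-step analysis. Writing $t_j$ for the expected absorption time from state $j$, the Markov property gives, for $1\le j\le N-1$, the recurrence $t_j=1+T_j^+t_{j+1}+T_j^-t_{j-1}+(1-T_j^+-T_j^-)t_j$ with boundary conditions $t_0=t_N=0$ (each propagation attempt, including a null one, counts as one step). First I would recast this as a first-order recurrence in the increments $\Delta_j:=t_j-t_{j+1}$: substituting $t_{j+1}=t_j-\Delta_j$ and $t_{j-1}=t_j+\Delta_{j-1}$ cancels the $t_j$ terms and collapses the equation to $\Delta_j=\gamma_j\Delta_{j-1}+1/T_j^+$, where $\gamma_j=T_j^-/T_j^+$.

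Next I would solve this linear recurrence explicitly. Unrolling from $\Delta_0=t_0-t_1=-t_1$ yields $\Delta_k=-t_1\prod_{m=1}^k\gamma_m+\sum_{l=1}^k\frac1{T_l^+}\prod_{m=l+1}^k\gamma_m$. Telescoping the increments over $k=N/2,\dots,N-1$ and using $t_N=0$ gives $t_{N/2}=\sum_{k=N/2}^{N-1}\Delta_k$, which is exactly \cref{eq: t11}. The still-unknown constant $t_1$ is then pinned down by the second boundary condition: summing all increments over $k=0,\dots,N-1$ telescopes to $t_0-t_N=0$, and solving the resulting linear equation for $t_1$ reproduces the stated $t_1=(1+\sum_{k=1}^{N-1}\prod_{j=1}^k\gamma_j)^{-1}\sum_{k=1}^{N-1}\sum_{l=1}^k\frac1{T_l^+}\prod_{j=l+1}^k\gamma_j$. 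This establishes \cref{eq: t11}.

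To reach \cref{eq: t22} I would exploit the $A\leftrightarrow B$ symmetry of the dynamics. Using \cref{lemma: plos} and the explicit configuration laws of \cref{theorem: piAkm}, relabeling opinions sends $(L_{AA},L_{AB},L_{BB})$ at composition $j$ to $(L_{BB},L_{AB},L_{AA})$ at composition $N-j$, whence $\lambda_A^{(N-j)}=\lambda_B^{(j)}$ and $p_A^{(N-j)}=1-p_B^{(j)}$; the change of variable $m\mapsto k-m$ in \cref{eq: noT1-} then gives the exact identity $T_j^+=T_{N-j}^-$. Consequently $\gamma_m\gamma_{N-m}=1$, so $\prod_{m=1}^{N-1}\gamma_m=1$ and the partial products are palindromic, $\prod_{m=1}^{k}\gamma_m=\prod_{m=1}^{N-1-k}\gamma_m$. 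This palindromy yields the reduction $2\sum_{k=N/2}^{N-1}\prod_{m=1}^k\gamma_m=\sum_{k=0}^{N-1}\prod_{m=1}^k\gamma_m$, and inserting it together with the closed form of $t_1$ into \cref{eq: t11} collapses the term $-t_1\sum_{k=N/2}^{N-1}\prod_{m=1}^k\gamma_m$ into $-\tfrac12\sum_{k=1}^{N-1}\sum_{l=1}^k\frac1{T_l^+}\prod_{m=l+1}^k\gamma_m$, giving \cref{eq: t22}.

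Finally, \cref{eq: t33} is obtained from \cref{eq: t22} by interchanging the order of the two summations (summing over the source index $l$ before the target index $k$), splitting the $l$-range at the midpoint $N/2$, and regrouping the inner $k$-sums so that the half-ranges $\{l,\dots,N/2-1\}$ and $\{N/2,\dots,N-1\}$ reassemble into $\{l,\dots,N-1\}$; this is a purely combinatorial rearrangement requiring nothing beyond \cref{eq: t22}. I expect the main obstacle to be neither of the probabilistic steps (Form 1 and the value of $t_1$ are routine first-step analysis) but rather two algebraic points: establishing the exact symmetry $T_j^+=T_{N-j}^-$, which forces one to plug in the explicit Poisson--Binomial laws and carry out the $m\mapsto k-m$ substitution carefully, and then tracking the boundary contributions at the midpoint $k=N/2$ when reorganizing the double sums, since an off-by-one in the inner summation limits is precisely what would break the equivalence of the three displayed forms. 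The two-boundary setting is what makes this delicate: unlike a single absorbing barrier, here the increment recurrence is inhomogeneous and couples $t_1$ to the global normalization $1+\sum_{k=1}^{N-1}\prod_{j=1}^k\gamma_j$, so the cancellations leading to Forms 2 and 3 must be managed with care.
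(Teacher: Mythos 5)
Your proposal is correct and follows essentially the same route as the paper: the first form together with the expression for $t_1$ is the standard birth--death absorption-time formula (which the paper simply imports from Ref.~\cite{BOOK2009Arne} and you rederive by first-step analysis), the factor $1/2$ comes from the model's $A\leftrightarrow B$ symmetry, and the third form is obtained by exchanging the order of the double sums. The only cosmetic difference is in how the $1/2$ is cashed out -- the paper identifies the coefficient $\Gamma_1$ with $1-\Phi_{N/2}$ and invokes symmetry of the fixation probability, whereas you prove the equivalent sum identity directly from $T_j^+=T_{N-j}^-$ and the resulting palindromy of the products $\prod_{m=1}^k\gamma_m$ -- which is the same symmetry exploited one level lower.
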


\begin{proof}
    The detailed proof is shown in \cref{appendix: proof time}.
\end{proof}

We study the consensus time $t_{N/2}(0,0,0)$ of the neutral model.
Based on Eq.~\eqref{eq: noT1+}, the transition probability $T_j^+$ of $Z(t)$ is given by 
\begin{align}
    T_j^+=&x_B\cdot\sum\limits_{k=0}^{\infty} \frac{\lambda_B^k}{k!}e^{-\lambda_B}\sum\limits_{m=0}^{k}\tbinom{k}{m}p_B^m(1-p_B)^{k-m}\cdot\left(\frac{m}{k}\right),\label{eq: n_tran}
\end{align}
where $\lambda_B=\Bar{k}$ and $p_B=x_A$.
In order to simplify the form, we define an operator as followed.
\begin{definition}
    Define an operator $\mathbb{E}_{Y}[\mathrm{f}(X)]$, which represents $\mathbb{E}[\mathrm{f}(X)]$ if random variable $X$ follows the distribution $Y$.\label{def: E}
\end{definition}

For example, $\mathbb{E}_{B(k,x_A)}[X]$ represents the first order moment (i.e., the expectation) of the Binomial distribution $B(k,x_A)$, i.e., $\mathbb{E}_{B(k,x_A)}[X]=\sum\limits_{x=0}^{k}\tbinom{k}{x}x_A^x(1-x_A)^{k-x}\cdot x=kx_A$. Based on \cref{def: E}, the transition probability $T_j^+$ is also given by
\begin{align}
    T_j^+=&x_B\cdot\sum\limits_{k=0}^{\infty} \frac{\Bar{k}^k}{k!}e^{-\Bar{k}}\mathbb{E}_{B(k,x_A)}(X)\cdot\frac{1}{k}\nonumber\\
    =&x_B\cdot\sum\limits_{k=0}^{\infty} \frac{\Bar{k}^k}{k!}e^{-\Bar{k}}\cdot x_A\nonumber\\
    =&x_B x_A\cdot\mathbb{E}_{Poisson(\Bar{k})}[1]\nonumber\\
    =&x_Ax_B.\nonumber
\end{align}
Similarly, the transition probability $T_j^-=x_Ax_B$ is obtained.

\begin{proposition}
    For large $N$, the consensus time of neutral model is given by
    \begin{equation}
        t_{N/2}(0,0,0)=\ln{2}\cdot N^2.
    \end{equation}
\end{proposition}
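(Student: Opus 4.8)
The plan is to specialize the exact formula of \cref{prop: time} to the neutral case, where the transition ratios collapse to unity, and then to evaluate the resulting double sums in closed form up to harmonic numbers. The key simplification is already contained in the computation preceding the statement: in the neutral model $T_j^+ = T_j^- = x_A x_B = j(N-j)/N^2$ for every interior state $j$. Hence $\gamma_j = T_j^-/T_j^+ = 1$ for all $j$, so every product $\prod_{m=l+1}^k \gamma_m$ appearing in \cref{prop: time} equals $1$, while $1/T_l^+ = N^2/[l(N-l)]$. This is what makes the neutral case tractable: the biased-walk structure disappears and only the reciprocal rates survive.

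I would then substitute into form \eqref{eq: t22}, which is the most convenient here because it avoids the auxiliary quantity $t_1$. With $\gamma_m \equiv 1$ it reduces to
\[
t_{N/2} = N^2\Bigl(\sum_{k=N/2}^{N-1}\sum_{l=1}^{k}\frac{1}{l(N-l)} - \frac12 \sum_{k=1}^{N-1}\sum_{l=1}^{k}\frac{1}{l(N-l)}\Bigr).
\]

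Next I would evaluate these double sums by exchanging the order of summation and applying the partial-fraction identity $\frac{1}{l(N-l)} = \frac1N\bigl(\frac1l + \frac1{N-l}\bigr)$. The full triangular sum collapses cleanly: $\sum_{k=1}^{N-1}\sum_{l=1}^{k}\frac{1}{l(N-l)} = \sum_{l=1}^{N-1}\frac{N-l}{l(N-l)} = H_{N-1}$, the $(N-1)$-th harmonic number. The truncated sum $\sum_{k=N/2}^{N-1}\sum_{l\le k}$ splits according to whether $l\le N/2$ (each term is counted $N/2$ times) or $l>N/2$ (each term is counted $N-l$ times), and after the partial-fraction step together with the reindexing $m = N-l$ it reduces to a combination of $H_{N-1}$, $H_{N/2}$ and an $O(1/N)$ correction. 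Collecting terms gives the exact identity $t_{N/2} = N^2\bigl(H_{N-1} - H_{N/2} + 1/N\bigr)$.

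Finally, the asymptotics are immediate: since $H_{N-1} - H_{N/2} = \sum_{l=N/2+1}^{N-1}\frac1l \to \ln 2$ as $N\to\infty$ and the $1/N$ term is negligible, I obtain $t_{N/2}(0,0,0) = \ln 2\cdot N^2$ to leading order. I expect the only genuine obstacle to lie in the third step—keeping the summation limits, the split at $N/2$, and the harmonic-number reindexing straight—since everything else is a direct substitution or a standard limit. As a cross-check I would note that the recurrence equivalent to \cref{prop: time}, namely $t_{j+1} - 2t_j + t_{j-1} = -N^2/[j(N-j)]$ with $t_0 = t_N = 0$, has continuum solution $-N^2[x\ln x + (1-x)\ln(1-x)]$ with $x=j/N$, whose value at $x=1/2$ is exactly $\ln 2\cdot N^2$, in agreement with the sum evaluation.
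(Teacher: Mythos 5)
Your proof is correct, and it reaches the same leading-order answer as the paper by a slightly different and in fact cleaner route. Both arguments begin identically: specialize \cref{prop: time} to the neutral case, observe $\gamma_j\equiv 1$ and $1/T_l^+=N^2/[l(N-l)]=N(1/l+1/(N-l))$. From there the paper works with the form \eqref{eq: t11} (equivalently \eqref{eq: t0}), which requires first computing the auxiliary quantity $t_1=(N-1)\ln(N-1)$, and it evaluates all sums by the integral approximation $\sum\approx\int$, then controls the leftover difference $N\ln N-(N-1)\ln(N-1)$ via Lagrange's Mean Value Theorem to conclude the corrections are $\mathcal{O}(N\ln N)$. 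You instead take the form \eqref{eq: t22}, which eliminates $t_1$ from the outset (and with $\gamma_m\equiv1$ the prefactor $\Gamma_1=\tfrac{N/2}{N}=\tfrac12$ is immediate, no symmetry argument needed), and you evaluate the triangular sums \emph{exactly} by exchanging the order of summation: I checked that your identity
\begin{equation*}
t_{N/2}=N^2\left(H_{N-1}-H_{N/2}+\frac{1}{N}\right)
\end{equation*}
is correct, and since $H_{N-1}-H_{N/2}=\ln 2+\mathcal{O}(1/N)$ this gives $t_{N/2}=\ln 2\cdot N^2+\mathcal{O}(N)$ — a closed form with a sharper error bound than the paper's $\mathcal{O}(N\ln N)$. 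Your continuum cross-check via the second-difference recurrence and the entropy-like solution $-N^2[x\ln x+(1-x)\ln(1-x)]$ is also sound and matches the known neutral voter-model result. In short: same skeleton, but your choice of the $t_1$-free form plus exact harmonic-number bookkeeping buys an exact finite-$N$ expression, whereas the paper's integral-approximation route generalizes more readily to the perturbed ($\alpha,\delta\neq 0$) cases treated in the later appendices.
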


\begin{proof}
    The detailed calculation process is shown in \cref{appendix: tn}.
\end{proof}

The found consensus time on dynamical network agrees with that on complete graphs \cite{PRL2005VoterHeterogeneous}.
Furthermore, our analysis explains why the previous approximations based on mean-field equations are accurate for the linear spreading rate \cite{PlosOne2010Bin, Du2022EvolutionaryDO, Du2024AsymmetricGP, CCC2020} since the exact local network configurations lead to the same transition probabilities for the neutral model.

\subsection{Biased assimilation}

We study $\partial_\alpha t_{N/2}(0,0,0)$.
under weak biased assimilation, the transition probability from state $j$ to $j\pm 1$ is approximated by $T_j^\pm=T_j^\pm\bigg|_{\alpha=0}+\alpha\cdot\partial_\alpha T_j^\pm\bigg|_{\alpha=0}$.
The transition probability from state $j$ to $j+1$ is approximated by $T_j^+=T_j^+\bigg|_{\alpha=0}+\alpha\cdot\partial_\alpha T_j^+\bigg|_{\alpha=0},$
where $\partial_\alpha T_j^+\bigg|_{\alpha=0}\approx x_Ax_B\ln{x_A}+x_B^2\cdot F.$
Here $F=e^{-\Bar{k}}\left[Ei(\Bar{k})-\ln{\Bar{k}}-\gamma\right]/2$ is a constant since the average degree  $\bar{k}$ is invariant during the evolutionary process.
And $Ei(x)$ is the exponential integral function \cite{BOOK2014MathematicalforPhysicists} and $\gamma\approx 0.577$ is the Euler–Mascheroni constant.
The detailed calculation is shown in \cref{appendix: tran_alpha}.
We show that $\partial_\alpha T_j^+\big|_{\alpha=0}$ are good approximations for $\varepsilon <j <N-\varepsilon$ where $\varepsilon=\mathcal{O}(1)\ll N$ is a small positive integer (see \cref{fig: tg}(b)).
In other words, the transition probabilities of $\{Z(t)\}$ in $\{\varepsilon+1,\varepsilon+2,\dots,N-\varepsilon-2,N-\varepsilon-1\}$ can be well approximated.

\begin{figure}[htb]
    \centering
    \subfloat[]{\includegraphics[width=.49\linewidth]{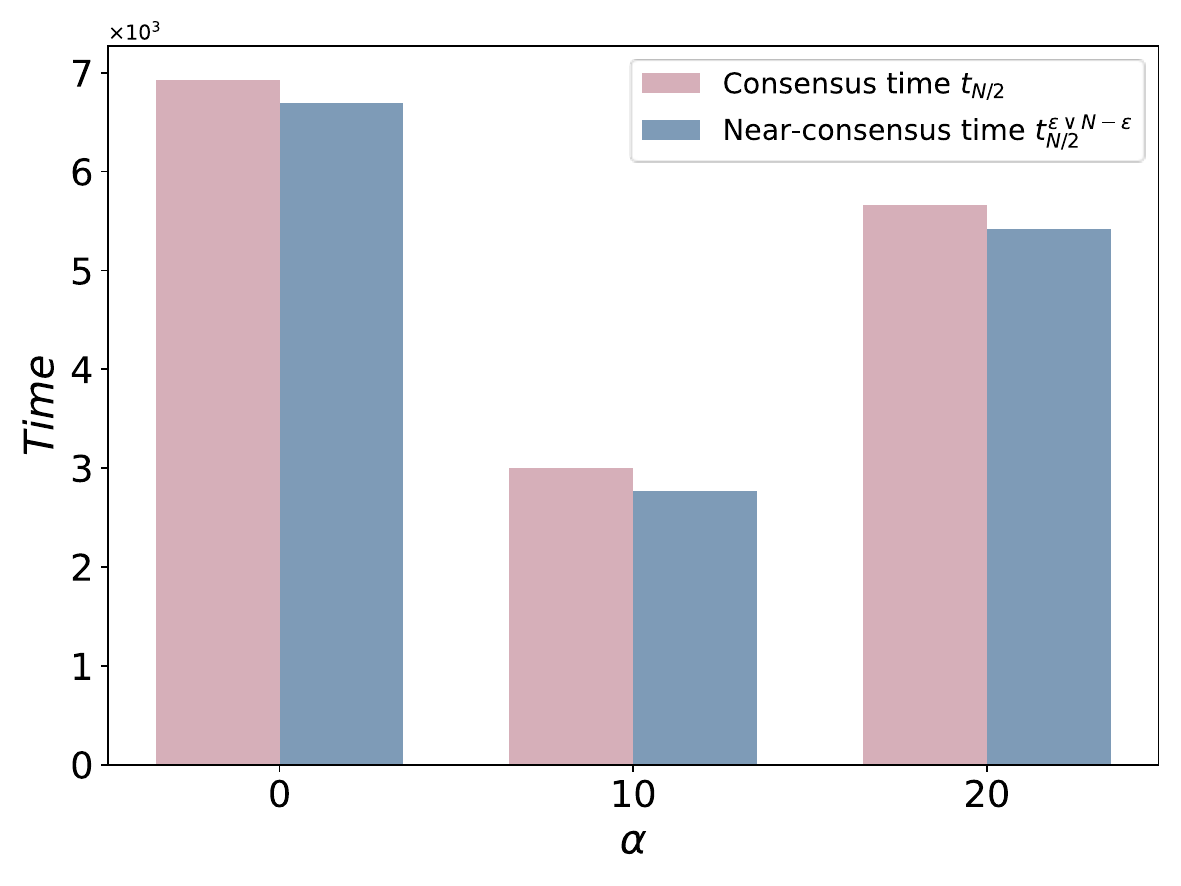}}
    \subfloat[]{\includegraphics[width=.49\linewidth]{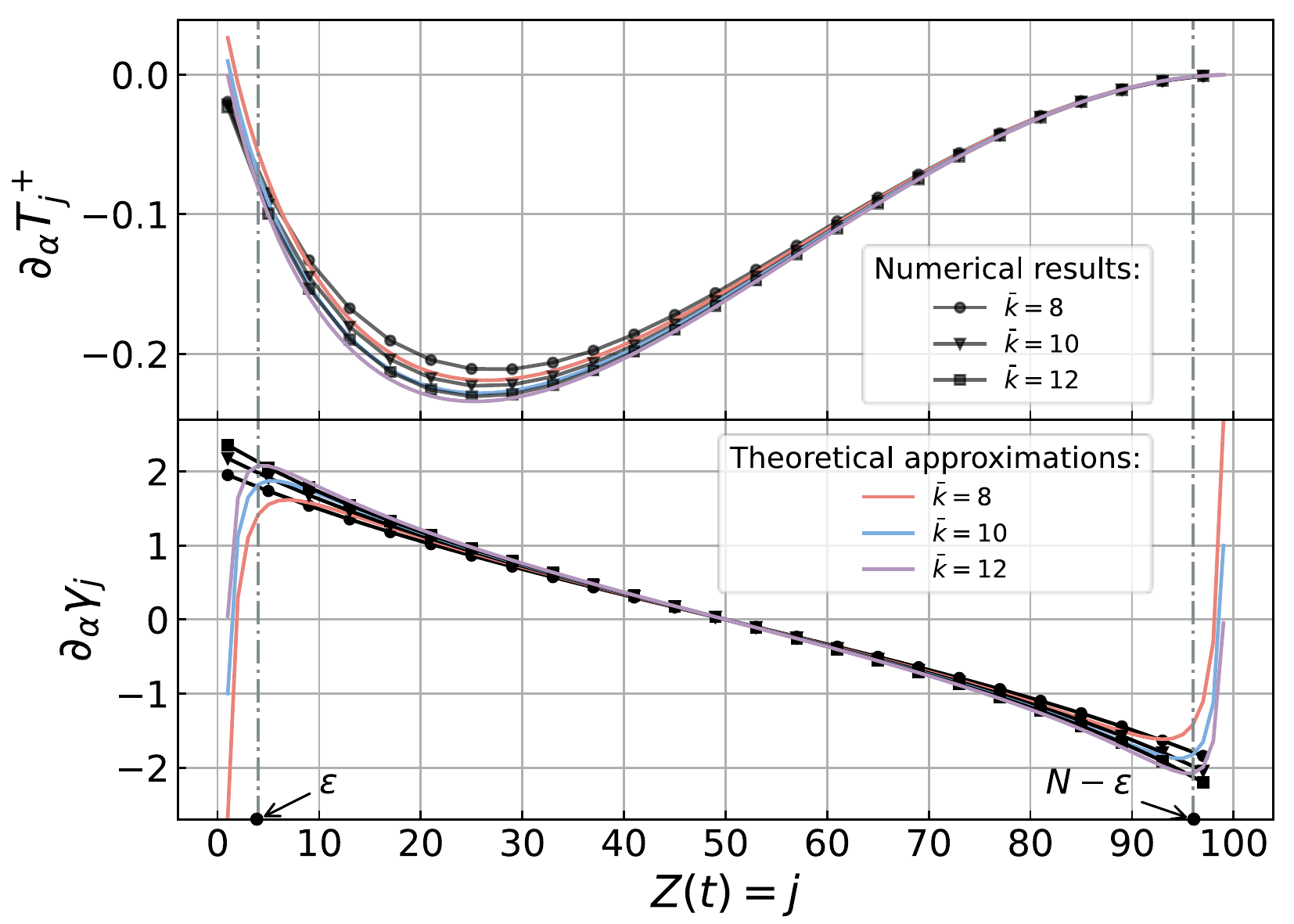}}
    
    \caption{Consensus time $t_{N/2}$ is well approximated by the near-consensus time $t_{N/2}^{\varepsilon\vee N-\varepsilon}$, in which the key intermediate variables $\partial_\alpha T_j^+\big|_{\alpha=0}$ and $\partial_\alpha \gamma_j\big|_{\alpha=0}$ are accurately predicted by Eq.~\eqref{eq: pa1} and Eq.~\eqref{eq: ga1}.
    (a) There is little difference between the consensus time and the near-consensus time for arbitrary $\alpha$.
    Here $N=100$, $L=500$, $\delta=\beta=0$ and $\varepsilon=5$.
    (b) The solid lines of different colors and the dots of different shapes represent the theoretical solutions Eqs.~(\ref{eq: pa1},\ref{eq: ga1}) and the numerical solutions Eqs.~(\ref{eq: pa}, \ref{eq: ga}) by computer. 
    The theoretical solution fits well almost but not well at the ends of the range:
    i) $\partial_\alpha T_j^+\big|_{\alpha=0}<0$ holds based on Eq.~\eqref{eq: pa} but the theoretical solution cannot hold when $j$ is sufficiently small. 
    ii) For $\partial_\alpha \gamma_j\big|_{\alpha=0}$ the Runge-like phenomenon is observed where the theoretical solution coincides well with the numerical solution but there is a pronounced deviation when $j<\varepsilon$ or $j>N-\varepsilon$. 
    It can be observed that $1<\varepsilon\ll N$. 
    Here $N=100$.}
    \label{fig: tg}
\end{figure}

\begin{definition}\label{def: near}
    Define near-consensus time $t_{N/2}^{\varepsilon\vee N-\varepsilon}$, which denotes the first hitting time that it takes to reach state $\varepsilon$ or state $N-\varepsilon$ starting from state $N/2$ for the first time, i.e., $t_{N/2}^{\varepsilon\vee N-\varepsilon}=\inf \{t:Z(t)=\varepsilon\vee N-\varepsilon, Z(0)=N/2\}$.
\end{definition}

The consensus time $t_{N/2}$ is approximated by the near-consensus time $t_{N/2}^{\varepsilon\vee N-\varepsilon}$ thanks to $\varepsilon=\mathcal{O}(1)\ll N$. Thus for large $N$ we obtain
\begin{align}
    \partial_\alpha t_{\frac{N}{2}}(0,0,0)=N^3\ln{N}\cdot\Lambda_\alpha.\label{eq: alpha1}
\end{align}
The detailed calculation is shown in \cref{appendix: t_alpha}.
Noteworthily we show that $\Lambda_\alpha=2F-\ln(\varepsilon+1)/2<0$ by taking into account $\varepsilon\ge 1 $ and $\bar{k}\ge 3.75$.
On one hand, this is because $F$ is a decreasing function of the average degree $\Bar{k}$ and $2F<0.345$ holds when the average degree is not so small, i.e., $\Bar{k}\ge 3.75$.
On the other hand, $\varepsilon\ge 1$ leads to $\ln(\varepsilon+1)/2>0.345$.



\subsection{In-group bias}
We study $\partial_\alpha t_{N/2}(0,0,0)$. 
For $\alpha=\beta=0$, the transition probability from $j$ to $j+1$ is given by
\begin{align}
    T_j^+=&x_B\cdot\sum\limits_{k=0}^{\infty} \frac{\lambda_B^k}{k!}e^{-\lambda_B}\sum\limits_{m=0}^{k}\tbinom{k}{m}p_B^m(1-p_B)^{k-m}\cdot\left(\frac{m}{k}\right)\nonumber\\
    =&x_B\cdot\sum\limits_{k=0}^{\infty} \frac{\lambda_B^k}{k!}e^{-\lambda_B}\mathbb{E}_{B(k,p_B)}[X]\cdot\frac{1}{k}\nonumber\\
    =&x_Bp_B\mathbb{E}_{Poisson(\lambda_B)}[1]\nonumber\\
    =&x_Bp_B.\label{eq: a_tran_delta}
\end{align}
where $\lambda_B$ and $p_B$ is defined in Eq.~\eqref{eq: piBkm1}.
Similarly, for $\alpha=\beta=0$ the transition probability from $j$ to $j-1$ is given by $T_j^-=x_A(1-p_A)$.
The transition probabilities are equal to that by mean-field approximations \cite{PlosOne2010Bin, Du2022EvolutionaryDO}.
This is because the expectation of the Binomial distribution $\mathbb{E}_{B(k,p_X)}(m/k)=p_X$.
It explains why mean-field equations leads to an accurate predictions for linear spreading rates.

Let $k_d=k_0$ and $k_s=k_0-\delta$.
For large $N$ we obtain 
\begin{align}
    \partial_\delta t_{N/2}(0,0,0)=N^3\cdot\Lambda_\delta,\label{eq: delta1}
\end{align}
where $\Lambda_\delta=(\ln{2}-1)/(6k_0)<0$ (see \cref{appendix: t_delta}).

\subsection{Matthew effect}
We study $\partial_\beta t_{N/2}(0,0,0)$. 
From Eq.~\eqref{eq: a_tran_delta}, we note that the transition probabilities of $\{Z(t)\}$ for linear spreading rates depend on the number of individuals with opinion $A$ given its neighbor size (i.e., the Binomial distribution) but not on the degree distribution (i.e., the Poisson distribution).
This turns out to be true even for weak Mathew effect.
In fact, the perturbation of the opinion distribution from the weak Matthew effect is of $\mathcal{O}(\beta^2)$ (see \cref{appendix: perturbation_beta}), leading to
\begin{align}
    \partial_\beta t_{N/2}(0,0,0)=0.\label{eq: beta1}
\end{align}

We find $\partial_\sigma t_{N/2}$ ($\sigma\in\{\alpha,\delta,\beta\}$) are non-positive, which implies either the weak biased assimilation, the weak in-group bias or the weak Matthew effect cannot slow down the formation of consensus.
Furthermore, the speedup from weak biased assimilation is stronger than that of weak in-group bias since $|\partial_\alpha t_{N/2}(0,0,0)|=\mathcal{O}(N^3\ln{N})>|\partial_\delta t_{N/2}(0,0,0)|=\mathcal{O}(N^3)$,
which are validated by simulations (see \cref{fig: weak1}).

\begin{figure}[htb]
    \centering
    \includegraphics[width=\linewidth]{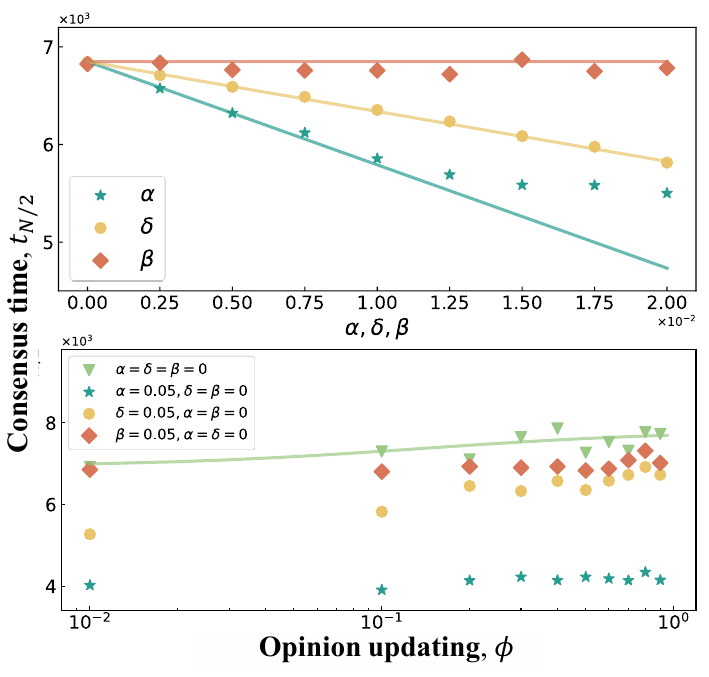}
    \caption{
    The exact $\pi_{k,m}^X$ facilitates us to show that the biased assimilation is the most influential on consensus time among in-group bias, Matthew effect and itself.
    Furthermore, the order is robust, even if we extrapolate from rare opinion updating limit,
    under which our analytical result holds.
    The exact $\pi_{k,m}^X$ precisely predicts the perturbation of transition probabilities by the three effects, leading to accurate predictions of the consensus time.
    Intuitively biased assimilation decreases the probability of opinion shift and in-group bias reduces interactions between different opinions.
    Instead, we show counterintuitive results: the weak biased assimilation and the weak in-group bias speed up consensus while the weak Matthew effect plays little role. 
    Compared with the weak in-group bias, the weak biased assimilation speeds up consensus more.
    The simulation data points are calculated by averaging over $500$ independent runs.
    Here $N=100$, $L=200$, $k_0=1$ and $\phi=0.01$ (top panel).
    The lines in the top panel are the theoretical solutions where the slopes are predicted by Eqs.~\eqref{eq: alpha1}-\eqref{eq: beta1} and $\varepsilon=1$.
    The line in the bottom panel is the theoretical solution in Ref.~\cite{WangXunlongChaos}.}
    \label{fig: weak1}
\end{figure}

\section{Evolutionary trajectory}

\begin{figure}[htb]
    \centering
    \includegraphics[width=0.99\linewidth]{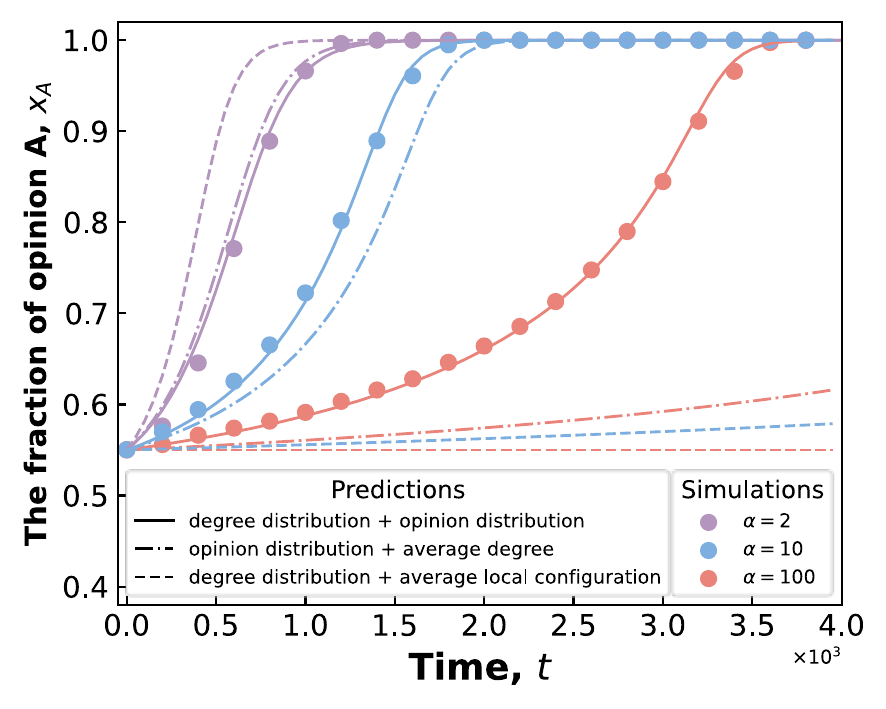}
    \caption{The predictions, taking into account both degree distribution and opinion distribution, i.e., $\pi_{k,m}^X,$ perfectly fit the simulations.
    The predictions without either degree distribution or opinion distribution are inaccurate, which gets worse as the non-linearity ($\alpha$) increases.
    In particular, they lead to wrong predictions when the non-linearity becomes so high ($\alpha\to\infty$) that our model degenerates to the threshold model \cite{PRX_2013_SKM}.
    It implies that both distributions are essential for accuracy.
    Here $N=100$, $L=500$, $k_d=k_s=1$, $\phi=0.01$, $\beta=0$ and the initial fraction of opinion $x_A(t=0)=0.55$.
    The simulation results are averages over $100$ realizations.}
    \label{fig: phase1}
\end{figure}

The exact network configuration $\pi_{k,m}^X$ is also able  to predict the evolutionary trajectory for nonlinear contagion dynamics, i.e., the fraction of opinion $A$ over time.
The evolution of opinion $A$ is typically governed by the mean-field equation, $d{x_A}/dt={T_j^+}-{T_j^-}$, where $T_j^\pm$ are defined in Eqs.~\eqref{eq: noT1+}\eqref{eq: noT1-}  taking fully into account the exact local configuration.
Compared with approximated master equations \cite{PRX_2013_SKM, PNAS_2012_SKM, Chaos_2016_SKM, PRE2010SKM}, the method based on $\pi_{k,m}^X$ dramatically reduces the dimensionality of the system while maintaining high accuracy (see \cref{fig: phase1}).

We show that both the degree distribution and the opinion distribution in the neighborhood with a given size  are indispensable and necessary for accurate predictions for nonlinear contagion dynamics.
We show this point by the following alternatives.

On one hand,
if degree distribution is given without the opinion distribution, then instead we adopt a typical assumption \cite{PRE2008Kimura, Chaos_2016_SKM, Nature2006SimpleRule} that
the fraction of $A-$neighbors around each $X-$individual is the same, which is estimated by the average fraction $p_X$.
This results in another transition probabilities 
\begin{numcases}{}
    T_j^+=x_B\cdot\sum\limits_{k=0}^\infty \frac{\lambda_B^k}{k!}e^{-\lambda_B}\cdot p_B^{1+\alpha},\\
    T_j^-=x_A\cdot\sum\limits_{k=0}^\infty \frac{\lambda_A^k}{k!}e^{-\lambda_A}\cdot(1-p_A)^{1+\alpha},
\end{numcases}
thus resulting in another mean-field equation.
The estimation ignores the fact that individuals with the same neighbor size can have a different number of $A-$neighbors, which results in a large deviation between predictions and simulations (\cref{fig: phase1}).

On the other hand, if opinion distribution is given without degree distribution, then instead
we adopt another typical assumption \cite{PRE2020Simplex,Du2024AsymmetricGP,RogersGross2013ConsensusTimeA} that the neighbor size of each $X-$individual is the same, which is estimated by $\lambda_X$. 
This results in another transition probabilities 
\begin{numcases}{}
    T_j^+=x_B\lambda_B\sum\limits_{m=0}^{\lambda_B}\tbinom{k}{m}p_B^m(1-p_B)^{k-m}(m/\lambda_B)^{1+\alpha},\\
    T_j^-=x_A\lambda_A\sum\limits_{m=0}^{\lambda_A}\tbinom{k}{m}p_A^m(1-p_A)^{k-m}(1-m/\lambda_A)^{1+\alpha},
\end{numcases}
where $\lambda_X$ is estimated by the nearest integer if it's not integer.
This results in another mean-field equation.
The estimation ignores the fact that the dynamical networks lead to changing degree distribution and individuals with same opinion can have different neighbor sizes.
It performs poorly, especially when the non-linearly is strong (\cref{fig: phase1}).

\section {Discussion}

For the proposed nonlinear opinion dynamics,
we give exact analytical solutions of the local network topology $\pi_{k,m}^X$.
It stores all the information about the nearest neighbors around the focal individual.
It's sufficient to predict the dynamics in most contagion models \cite{PRL_2011_SKM, Chaos_2016_SKM, PNAS_2012_SKM} for both the linear and nonlinear spreading rates.
To arrive at the obtained analytical results, we firstly propose a conjecture on the stationary regime of network topology.
Then we show the conjecture is correct thanks to the underlying Markov chain's reversibility.
The exact local network topology $\pi_{k,m}^X$ helps us arrive at the approximated transition probabilities for contagion dynamics on dynamical networks.
The obtained local network configuration explains why mean-field approximations lead to accurate predictions for linear spreading rates \cite{PlosOne2010Bin, CCC2020, Du2022EvolutionaryDO, Du2024AsymmetricGP}.
This is because the Binomial distribution in $\pi_{k,m}^X$ results in the same transition probabilities as that from mean-field approximations.
On the other hand, it provides closed-form analytical predictions with high accuracy for nonlinear spreading rates, which is never done due to the lack of exact results for local configurations. 

We take both absorbing (consensus) time and opinion dynamics over time \cite{PNAS_2012_SKM, PRL_2011_SKM, PRX_2013_SKM} into account. 
The mean-field equations only depend on the difference between transition probabilities.
The consensus time, however, depends not only on the deterministic terms but also on the stochastic terms driven by the internal noises \cite{Gardiner1986HandbookOStochastic, PRL2005VoterHeterogeneous, RogersGross2013ConsensusTimeA, NJP2008FT}.
Thus, it's challenging to predict the consensus time with high-accuracy due to the inefficiency of the mean-field equations.
Thanks to $\pi_{k,m}^X$, we arrive at the exact transition probabilities by the property of Poisson distribution and Binomial distribution.
It facilitates us to find that both weak biased assimilation and weak in-group bias speed up the consensus while weak Matthew effect has little impact on consensus time.
It's counterintuitive to speed up consensus because the biased assimilation intuitively decreases the probability of shift in opinions and the in-group bias intuitively decreases interactions between different opinions.
Furthermore, we find that the speedup effect via weak biased assimilation is stronger than that via weak in-group bias. 
Meanwhile, it implies that the non-linearity of spreading rates ($\alpha<0$) slows down consensus unlike the non-linearity by group interactions, which speeds it up \cite{PRE2020Simplex,PRE2022groupinteraction,SIAM2022Mason}.
Furthermore, we show 
both the degree distribution  and the opinion distribution  are indispensable and necessary for accurate predictions for nonlinear contagion dynamics.
This explains why our framework is necessary and powerful.

The non-linearity in our model arises from the fact that the focal individual takes the other opinion less likely than that in the voter model \cite{Liggett1985Major}. 
In fact, the proposed modified opinion updating can be equivalently interpreted as a threshold model:
the focal individual changes its opinion provided that the fraction of the other opinion in its neighborhood exceeds a threshold following power law distribution \cite{VVV2019interfaceBiased, PNAS2024Complexcontagion}.
This indicates the non-linearity of our model results from the power law of the threshold.  
The nonlinearity of Refs.~\cite{PNAS2023nonlinear, PRL2021Onge}, however, is caused by the heterogeneous infection rates over hyperedges,  although the underlying contagion is still linear. 
In spite of the intrinsic differences in non-linearity sources, our analytical framework can greatly benefit the exact local configurations of high-order dynamical networks, which can be a promising future direction.

To sum up, we obtain exact analytical results for network topology $\pi_{k,m}^X$, leading to accurate predictions for nonlinear contagion dynamics whereas previous methods cannot achieve.

\appendix
\section{The proof of \cref{prop: time}}\label{appendix: proof time}
Ref.~\cite{BOOK2009Arne} proposes
    \begin{numcases}{}
        t_1=\left(1+\sum\limits_{k=1}^{N-1}\prod\limits_{j=1}^k\gamma_j\right)^{-1}\sum\limits_{k=1}^{N-1}\sum\limits_{l=1}^k\frac{1}{T_l^+}\prod\limits_{j=l+1}^k\gamma_j,\label{eq: t1_}\\
        t_j=-t_1\cdot\sum\limits_{k=j}^{N-1}\prod\limits_{m=1}^k\gamma_m+\sum\limits_{k=j}^{N-1}\sum\limits_{l=1}^k\frac{1}{T_l^+}\prod\limits_{m=l+1}^k\gamma_m,\label{eq: tn_}
    \end{numcases}
where $\gamma_j=T_j^-/T_j^+$ is the ratio of the two transition probabilities.
Thus, we obtain
    \begin{align}
    t_{\frac{N}{2}}=&\sum\limits_{k=N/2}^{N-1}\sum\limits_{l=1}^k\frac{1}{T_l^+}\prod\limits_{m=l+1}^k\gamma_m-t_1\cdot\sum\limits_{k=N/2}^{N-1}\prod\limits_{m=1}^k\gamma_m\label{eq: t0}\\
    =&\sum\limits_{k=N/2}^{N-1}\sum\limits_{l=1}^k\frac{1}{T_l^+}\prod\limits_{m=l+1}^k\gamma_m-\underbrace{\left(\frac{\sum\limits_{k=N/2}^{N-1}\prod\limits_{m=1}^k \gamma_m}{1+\sum\limits_{k=1}^{N-1}\prod\limits_{m=1}^k \gamma_m}\right)}_{\Gamma_1=1-\Phi_{N/2}=1/2}\sum\limits_{k=1}^{N-1}\sum\limits_{l=1}^k\frac{1}{T_l^+}\prod\limits_{m=l+1}^k\gamma_m\label{eq: t1}\\
    =&\sum\limits_{k=N/2}^{N-1}\sum\limits_{l=1}^k\frac{1}{T_l^+}\prod\limits_{m=l+1}^k\gamma_m-\frac{1}{2}\sum\limits_{k=1}^{N-1}\sum\limits_{l=1}^k\frac{1}{T_l^+}\prod\limits_{m=l+1}^k\gamma_m\label{eq: t2}\\
    =&\frac{1}{2}\left[\sum\limits_{l=1}^{N/2}\frac{1}{T_l^+}\left(\sum\limits_{k=N/2}^{N-1}\prod\limits_{m=l+1}^k \gamma_m\right)+\sum\limits_{l=N/2+1}^{N-1}\frac{1}{T_l^+}\left(\sum\limits_{k=l}^{N-1}\prod\limits_{m=l+1}^k \gamma_m\right)-\sum\limits_{l=1}^{N/2}\frac{1}{T_l^+}\left(\sum\limits_{k=l}^{N/2}\prod\limits_{m=l+1}^k \gamma_m\right)\right]\label{eq: t3},
\end{align}
where $\Phi_j$ in Eq.~\eqref{eq: t1} represents the probability that $j$ individuals with opinion $A$ take over the whole population, i.e., $\Phi_{j}=\mathbb{P}r[Z(\infty)=N,Z(0)=j]$.
The expression of $\Phi_j$ is given in Ref.~\cite{BOOK2009Arne}.
Since our model is symmetric, $\Phi_{N/2}=1/2$ holds for arbitrary $\alpha,\delta,\beta$.
Thus $\Gamma_1=1-\Phi_{N/2}=1/2$ in Eq.~\eqref{eq: t1} holds.
Eq.~\eqref{eq: t3} is obtained by exchanging the order of the double sums in Eq.~\eqref{eq: t2}. This approach is analogous to that proposed in Appendix A of Ref.~\cite{NJP2008FT}.

\section{Calculation of consensus time $t_{N/2}(0,0,0)$ of neutral model}\label{appendix: tn}
We compute the consensus time $t_{N/2}(0,0,0)$ based on \cref{eq: t0}.
Since $T_j^+=T_j^-=x_A x_B$ in the neutral model, we obtain
\begin{numcases}{}
    \gamma_j=\frac{T_j^-}{T_j^+}=1,\label{eq: gn}\\
    \frac{1}{T_j^+}=N\left(\frac{1}{j}+\frac{1}{N-j}\right).\label{eq: t-1}
\end{numcases}

The sums in Eqs.~(\ref{eq: t0})-(\ref{eq: t3})  can be interpreted as numerical approximations to the integrals \cite{BOOK2009Arne, WangXunlongChaos}, i.e., $\sum\limits_{n=i}^j\dots\approx\int_{i}^j\dots dn$. The error of this approximation is of $\mathcal{O}(N^{-2})$ and is therefore negligible for large $N$. Then substituting Eqs.~(\ref{eq: gn},\ref{eq: t-1}), we obtain
\begin{align}
    t_1=(N-1)\ln(N-1),\label{eq: t1n}
\end{align}
and the consensus time is given by
\begin{align}
    t_{\frac{N}{2}}&=\sum\limits_{k=N/2}^{N-1}\sum\limits_{l=1}^k\frac{1}{T_l^+}\prod\limits_{m=l+1}^k\gamma_m-t_1\cdot\sum\limits_{k=N/2}^{N-1}\prod\limits_{m=1}^k\gamma_m\nonumber\\
    &=N^2\ln 2+\frac{N}{2}\ln N-N\underbrace{\left[N\ln N-(N-1)\ln (N-1)\right]}_{\Gamma_2}.\label{eq: tt}
\end{align}
Here $\Gamma_2=\mathrm{f}_1(N)-\mathrm{f}_1(N-1)$ where $\mathrm{f}_1(x)=x\ln x$. 
Based on Lagrange's Mean Value Theorem, $\Gamma_2=\mathrm{f}_1(N)-\mathrm{f}_1(N-1)=\mathrm{f}_1^{\prime}(\xi)\cdot[N-(N-1)]= \mathrm{f}_1^{\prime}(\xi)$ where $N-1<\xi<N$. 
Because $\mathrm{f}_1^{\prime}(x)=1+\ln{x}$ is a monotonically increasing function, $\Gamma_2=\mathrm{f}_1^{\prime}(\xi)<\mathrm{f}_1^{\prime}(N)=1+\ln N$, thus $\Gamma_2=\mathcal{O}(\ln{N})$.
Since $N\ln{N}\ll N^2$ for large $N$, we thus ignore the terms of $\mathcal{O}(N \ln N)$.
Thus the consensus time $t_{N/2}$ is $N^2\ln{2}$ for large $N$. 

\section{Approximation of $\partial_\alpha T_j^\pm$}\label{appendix: tran_alpha}
Under $\delta=\beta=0$ and $\alpha\to 0$, the transition probability from $j$ to $j+1$ is given by
\begin{align}
    T_j^+=&x_B\cdot\sum\limits_{k=0}^{\infty} \frac{\Bar{k}^k}{k!}e^{-\Bar{k}}\sum\limits_{m=0}^{k}\tbinom{k}{m}x_A^m(1-x_A)^{k-m}\cdot\left(\frac{m}{k}\right)^{1+\alpha}.\label{eq: a_tran_alpha}
\end{align}
We perform a Taylor expansion of $(m/k)^{1+\alpha}$ and intercept the linear terms. 
Then the transition probability is approximated by
\begin{align}
    T_j^+\approx &x_B\cdot\sum\limits_{k=0}^{\infty} \frac{\Bar{k}^k}{k!}e^{-\Bar{k}}\sum\limits_{m=0}^{k}\tbinom{k}{m}x_A^m(1-x_A)^{k-m}\cdot\left[\frac{m}{k}+\alpha\cdot\frac{m}{k}\ln\frac{m}{k}\right]\nonumber\\
    =&x_Ax_B+\alpha\cdot\underbrace{x_B\cdot\sum\limits_{k=0}^{\infty} \frac{\Bar{k}^k}{k!}e^{-\Bar{k}}\sum\limits_{m=0}^{k}\tbinom{k}{m}x_A^m(1-x_A)^{k-m}\cdot\left[\frac{m\ln m-m\ln k}{k}\right]}_{\partial_\alpha T_j^+\big|_{\alpha=0}}.\nonumber
\end{align}
where $\partial_\alpha T_j^+\big|_{\alpha=0}$ can be equivalently written as
\begin{align}
    \partial_\alpha T_j^+\big|_{\alpha=0}=x_B\cdot\sum\limits_{k=0}^{\infty} \frac{\Bar{k}^k}{k!}e^{-\Bar{k}}\left\{\mathbb{E}_{B(k,x_A)}[X\ln{X}]\cdot\frac{1}{k}-\mathbb{E}_{B(k,x_A)}[X]\cdot\frac{\ln{k}}{k}\right\}.\label{eq: pa}
\end{align}
Here we perform a Taylor expansion for $\mathbb{E}_{B(k,x_A)}[X\ln{X}]$, i.e., 
\begin{align}
    \mathbb{E}_{B(k,x_A)}[\mathrm{f}(X)]=&\mathbb{E}_{B(k,x_A)}\left[\mathrm{f}(\mu)+\mathrm{f}^{\prime}(\mu)(X-\mu)+\frac{1}{2}\mathrm{f}^{\prime\prime}(\mu)(X-\mu)^2+o((X-\mu)^2)\right],\nonumber\\
    =&\mathrm{f}(\mu)+\mathrm{f}^{\prime}(\mu)\mathbb{E}_{B(k,x_A)}\left[(X-\mu)\right]+\frac{1}{2}\mathrm{f}^{\prime\prime}(\mu)\mathbb{E}_{B(k,x_A)}\left[(X-\mu)^2\right]+o\left(\mathbb{E}_{B(k,x_A)}\left[(X-\mu)^2\right]\right),\label{eq: log}
\end{align}
where $\mathrm{f}(X)=X\ln{X}$ and $\mu=\mathbb{E}_{B(k,x_A)}[X]=kx_A$. 
We ignore $o\left(\mathbb{E}_{B(k,x_A)}\left[(X-\mu)^2\right]\right)$, thus 
\begin{align}
    \mathbb{E}_{B(k,x_A)}[X\ln{X}]=&\mu\ln{\mu}+(1+\ln{\mu})\mathbb{E}_{B(k,x_A)}[X-\mu]+\frac{1}{2\mu}\mathbb{E}_{B(k,x_A)}\left[(X-\mu)^2\right]\nonumber\\
    =&kx_A\ln{(kx_A)}+\frac{1}{2kx_A}\mathbb{D}[X]\nonumber\\
    =&x_Ak\ln{k}+kx_A\ln{x_A}+\frac{1-x_A}{2},\label{eq: XlogX}
\end{align}
where $\mathbb{D}[X]$ is the variance of the random variable $X$. Substituting Eq.~\eqref{eq: XlogX} into $\mathbb{E}_{B(k,x_A)}[X\ln{X}]$, the Eq. \eqref{eq: pa} becomes 
\begin{align}
    \partial_\alpha T_j^+\big|_{\alpha=0}=&x_B\cdot\sum\limits_{k=0}^{\infty} \frac{\Bar{k}^k}{k!}e^{-\Bar{k}}\left[x_A\ln{k}+x_A\ln{x_A}+\frac{x_B}{2k}-x_A\ln{k}\right]\nonumber\\
    =&x_B\cdot\sum\limits_{k=0}^{\infty} \frac{\Bar{k}^k}{k!}e^{-\Bar{k}}\left[x_A\ln{x_A}+\frac{x_B}{2k}\right]\nonumber\\
    =&x_Bx_A\ln{x_A}\mathbb{E}_{Poisson(\Bar{k})}[1]+\frac{x_B^2}{2}\mathbb{E}_{Poisson(\Bar{k})}\left[\frac{1}{k}\right].\label{eq: pa2}
\end{align}
Based on the series expansion of exponential integral $Ei(x)=P.V.\int_{-\infty}^x\frac{e^t}{t}dt=\ln{x}+\gamma+\sum\limits_{n=1}^\infty \frac{x^n}{n\cdot n!}$ where $\gamma\approx0.577$ is the Euler–Mascheroni constant \cite{BOOK2014MathematicalforPhysicists}, we obtain
\begin{align}
    \mathbb{E}_{poisson(\Bar{k})}\left[\frac{1}{k}\right]=&\sum\limits_{k=1}^\infty \frac{\Bar{k}^k}{k!}e^{-\Bar{k}}\frac{1}{k}\nonumber\\
    =&e^{-\Bar{k}}\sum\limits_{k=1}^\infty \frac{\Bar{k}^k}{k!\cdot k}\nonumber\\
    =&e^{-\Bar{k}}\cdot\left[Ei(\Bar{k})-\ln{\Bar{k}}-\gamma\right].\label{eq: k-1}
\end{align}
Thus Eq.~\eqref{eq: pa2} becomes
\begin{align}
    \partial_\alpha T_j^+\big|_{\alpha=0}=x_Ax_B\ln{x_A}+x_B^2\cdot\underbrace{\frac{e^{-\Bar{k}}}{2}\left[Ei(\Bar{k})-\ln{\Bar{k}}-\gamma\right]}_F,\label{eq: pa1}
\end{align}
where $F$ is a decreasing function of $\Bar{k}$ and is less than $0.172$ when the average degree is not so small (i.e., $\Bar{k}\ge 3.75$).
Similarly, the transition probability from $j$ to $j-1$ is approximated by
\begin{align}
    T_j^-=x_Ax_B+\alpha\cdot\underbrace{\left[ x_Ax_B\ln x_B+x_A^2F\right]}_{\partial_\alpha T_j^-\big|_{\alpha=0}}.\label{eq: pa-}
\end{align}

\section{Calculation of $\partial_\alpha t_{N/2}(0,0,0)$}\label{appendix: t_alpha}
For $\alpha\to 0$ and $\delta=\beta=0$, the transition probabilities are well approximated by $T_j^+=x_Ax_B+\alpha\cdot[x_Ax_B\ln{x_A}+x_B^2F]$ and $T_j^-=x_Ax_B+\alpha\cdot[x_Ax_B\ln{x_B}+x_A^2F]$ when $\varepsilon<j<N-\varepsilon$.
We perform a linear approximation of $\gamma_j=T_j^-/T_j^+$ at $\alpha=0$, i.e.,
\begin{align}
    \gamma_j=&1+\alpha\cdot\underbrace{\frac{\partial_\alpha T_j^--\partial_\alpha T_j^+}{x_Ax_B}\bigg|_{\alpha=0}}_{\partial_\alpha \gamma_j\big|_{\alpha=0}}+o(\alpha)\label{eq: ga}\\
    \approx& 1+\alpha\cdot\left[F\left(\frac{1}{x_B}-\frac{1}{x_A}\right)+\ln\frac{x_B}{x_A}\right].\label{eq: ga1}
\end{align}

As shown in \cref{fig: tg}(b), $\partial_\alpha T_j^+\big|_{\alpha=0}$ and $\partial_\alpha \gamma_j\big|_{\alpha=0}$ are respectively well approximated by Eq. \eqref{eq: pa1} and Eq. \eqref{eq: ga1} but not for $j<\varepsilon$ or $j>N-\varepsilon$, where $\varepsilon=\mathcal{O}(1)\ll N$. In other words, the transition probabilities of $\{Z(t)\}$ in $\{\varepsilon+1,\varepsilon+2,\dots,N-\varepsilon-2,N-\varepsilon-1\}$ can be well approximated.
Thanks to $\varepsilon=\mathcal{O}(1)\ll N$, we calculate $\partial_\alpha t_{N/2}^{\varepsilon\vee N-\varepsilon}$ to approximate $\partial_\alpha t_{N/2}$.
Motivated by the expression of the consensus time Eq. \eqref{eq: t3}, we obtain the near-consensus time $t_{\frac{N}{2}}^{\varepsilon\vee N-\varepsilon}$, which is given by
\begin{align}
    \frac{1}{2}\Bigg[\underbrace{\sum\limits_{l=\varepsilon+1}^{N/2}\frac{1}{T_l^+}\left(\sum\limits_{k=N/2}^{M+\varepsilon-1}\prod\limits_{m=l+1}^k \gamma_m\right)}_{A_1}\!+\!\underbrace{\sum\limits_{l=N/2+1}^{N-\varepsilon-1}\frac{1}{T_l^+}\left(\sum\limits_{k=l}^{M+\varepsilon-1}\prod\limits_{m=l+1}^k \gamma_m\right)}_{A_2}\!-\!\underbrace{\sum\limits_{l=\varepsilon+1}^{N/2}\frac{1}{T_l^+}\left(\sum\limits_{k=l}^{N/2}\prod\limits_{m=l+1}^k \gamma_m\right)}_{A_3}\Bigg],\label{eq: near}
\end{align}
where $M=N-2\varepsilon$. 
It consists of three parts $A_1$, $A_2$ and $A_3$. Then we compute the three parts.

For $A_1$,
\begin{align}
    &\sum\limits_{k=N/2}^{M+\varepsilon-1}\prod\limits_{m=l+1}^k \gamma_m\nonumber\\
    =&\frac{M}{2}-1+\alpha\cdot\Bigg\{\left(\frac{M}{2}-1\right)[(NF+N-l-1)\ln(N-l-1)+(NF+l+1)\ln(l+1)]\nonumber\\
    &-NF\left[(M+\varepsilon-1)\ln(M+\varepsilon-1)-(\varepsilon+1)\ln(\varepsilon+1)+2-M\right]\nonumber\\
    &-\left[\frac{(M+\varepsilon-1)^2}{2}\ln(M+\varepsilon-1)-\frac{(\varepsilon+1)^2}{2}\ln(\varepsilon+1)-\frac{N}{4}(M-2)\right]\Bigg\}.\nonumber
\end{align}
Then
\begin{align}
    &\frac{1}{T_l^+}\left(\sum\limits_{k=N/2}^{M+\varepsilon-1}\prod\limits_{m=l+1}^k \gamma_m\right)\nonumber\\
    =&\left(\frac{M}{2}-1\right)N\left(\frac{1}{l}+\frac{1}{N-l}\right)+\alpha N \cdot\nonumber\\
    &\Bigg\{\!\left(\frac{M}{2}\!-\! 1\!\right)\!\left[\!(\!NF\!+\!N\!-\!1)\frac{\ln(N\!-\!l\!-\!1)}{l}\!+\!(\!NF\!+\!1\!)\frac{\ln(l+1)}{l}\!+\!(\!NF\!-\!1)\frac{\ln(N\!-\!l\!-\!1)}{N\!-\!l}\!+\!(\!NF\!+\!N\!+\!1\!)\frac{\ln(l\!+\!1)}{N-l}\right]\nonumber\\
    &-NF\left[(M+\varepsilon-1)\ln(M+\varepsilon-1)-(\varepsilon+1)\ln(\varepsilon+1)+2-M\right]\left(\frac{1}{l}+\frac{1}{N-l}\right)\nonumber\\
    &-\left[\frac{(M+\varepsilon-1)^2}{2}\ln(M+\varepsilon-1)-\frac{(\varepsilon+1)^2}{2}\ln(\varepsilon+1)-\frac{N}{4}(M-2)\right]\left(\frac{1}{l}+\frac{1}{N-l}\right)\nonumber\\
    &-\left(\frac{M}{2}-1\right)\left[\ln l\left(\frac{1}{l}+\frac{1}{N-l}\right)-\ln N(\frac{1}{l}+\frac{1}{N-l})+\frac{NF}{l^2}\right]\Bigg\}.\nonumber
\end{align}
And then
\begin{align}
    A_1=&\left(\frac{M}{2}-1\right)N\left[\ln(M+\varepsilon-1)-\ln(\varepsilon+1)\right]+\alpha N\cdot\nonumber\\
    &\Bigg\{B_1-NF\left[(M+\varepsilon-1)\ln(M+\varepsilon-1)-(\varepsilon+1)\ln(\varepsilon+1)+2-M\right]\cdot\left[\ln(N-\varepsilon-1)-\ln(\varepsilon+1)\right]\nonumber\\
    &-\left[\frac{(M+\varepsilon-1)^2}{2}\ln(M+\varepsilon-1)-\frac{(\varepsilon+1)^2}{2}\ln(\varepsilon+1)-\frac{N}{4}(M-2)\right]\cdot\left[\ln(N-\varepsilon-1)-\ln(\varepsilon+1)\right]\nonumber\\
    &+\left(\frac{M}{2}-1\right)\ln N\cdot\left[\ln(N-\varepsilon-1)-\ln(\varepsilon+1)\right]-NF\left(\frac{M}{2}-1\right)\left(\frac{1}{\varepsilon+1}-\frac{2}{N}\right)\nonumber\\
    &-\left(\frac{M}{2}-1\right)\frac{1}{2}\cdot\left[\ln^2\frac{N}{2}-\ln^2(\varepsilon+1)\right]\Bigg\}\nonumber\\
    =& \left(\frac{M}{2}-1\right)N\left[\ln(N-\varepsilon-1)-\ln(\varepsilon+1)\right]+\alpha N\cdot\nonumber\\
    &\Bigg\{\!B_1\!-\!NF(N\!-\!\varepsilon\!-\!1)\ln^2(N\!-\!\varepsilon\!-\!1)\!
    +\!NF(N\!-\!\varepsilon\!-\!1)\ln(\varepsilon\!+\!1)\ln(N\!-\!\varepsilon\!-\!1)\!-\!\frac{(N\!-\!\varepsilon\!-\!1)^2}{2}\ln^2(N\!-\!\varepsilon\!-\!1)\nonumber\\
    &+N^2F\ln(N\!-\!\varepsilon\!-\!1)\!+\!\frac{N}{4}(N\!-\!2\varepsilon\!-\!2)\ln(N\!-\!\varepsilon\!-\!1)
    \!+\!\frac{(N\!-\!\varepsilon\!-\!1)^2}{2}\ln(\varepsilon\!+\!1)\ln(N\!-\!\varepsilon\!-\!1)\!+\!\mathcal{O}(N^2)\Bigg\}\nonumber\\
    =& \left(\frac{M}{2}- 1\right)N\left[\ln(N-\varepsilon- 1)-\ln(\varepsilon+ 1)\right]+\alpha \cdot\nonumber\\
    &\left\{\!N B_1\!-\!N^3F\ln^2 N\!+\!N^3F\ln(\varepsilon\!+\!1)\ln N\!-\!\frac{N^3}{2}\ln^2N\!+\!N^3F\ln N\!+\!\frac{N^3}{4}\ln N\!+\!\frac{N^3}{2}\ln(\varepsilon\!+\!1)\ln N\!+\!\mathcal{O}(N^3)\!\right\}\nonumber\\
    =& \left(\frac{M}{2}- 1\right)N\left[\ln(N-\varepsilon- 1)-\ln(\varepsilon+ 1)\right]+\alpha \cdot\nonumber\\
    &\left\{N B_1-N^3\ln^2 N \left(F+\frac{1}{2}\right)+N^3\ln N\left[F+F\ln(\varepsilon+1)+\frac{1}{4}-\frac{\ln(\varepsilon+1)}{2}\right]+\mathcal{O}(N^3)\right\}\label{eq: A1}
\end{align}
where
\begin{align}
    B_1=&\left(\frac{M}{2}-1\right)\cdot\sum\limits_{l=\varepsilon+1}^{N/2}\bigg[(NF+N-1)\frac{\ln(N-l-1)}{l}+(NF+1)\frac{\ln(l+1)}{l}\nonumber\\
    &+(NF-1)\frac{\ln(N-l-1)}{N-l}+(NF+N+1)\frac{\ln(l+1)}{N-l}-\frac{\ln l}{N-l}\bigg].
\end{align}

Similarly, we obtain 
\begin{align}
    A_2=&N(M\!+\!\varepsilon\!-\!1)\!\left\{\ln(M\!+\!\varepsilon\!-\!1)\!-\!\ln(\varepsilon\!+\!1)\!-\!\left[\ln(\frac{N}{2}\!+\!1)\!-\!\ln(\frac{N}{2}\!-\!1)\!\right]\!\right\}\!+\!N^2\left[\ln(\varepsilon\!+\!1)\!-\!\ln(\frac{N}{2}\!-\!1)\!\right]\!+\!\alpha \cdot\nonumber\\
    &\left\{N B_2+N^3\ln N\bigg(4F-2F\ln 2-\ln 2\right)+\mathcal{O}(N^3)\bigg\},\label{eq: A2}
\end{align}
and 
\begin{align}
    A_3=&\frac{N^2}{2}\left[\ln(N-\varepsilon-1)-\ln(\varepsilon+1)\right]+N^2\left[\ln\frac{N}{2}-\ln(N-\varepsilon-1)\right]+\alpha \cdot\nonumber\\
    &\Bigg\{\!NB_3\!-\!N^3\ln^2N\!\left(\!F\!+\!\frac{1}{2}\!\right)\!+\!N^3\ln N\left[2F\ln 2\!+\!F\ln(\varepsilon\!+\!1)\!+\!F\!+\!\ln 2\!+\!\frac{\ln(\varepsilon\!+\!1)}{2}\!+\!\frac{1}{4}\right]\!+\!\mathcal{O}(N^3)\!\Bigg\},\label{eq: A3}
\end{align}
where
\begin{align}
    B_2=&(N-\varepsilon-1)\sum\limits_{l=N/2+1}^{N-\varepsilon-1}\left[(NF+N-1)\frac{\ln(N-l-1)}{l}+(NF+1)\frac{\ln(l+1)}{l}\right]\nonumber\\
    &-(\varepsilon+1)\sum\limits_{l=N/2+1}^{N-\varepsilon-1}\left[(NF-1)\frac{\ln(N-l-1)}{N-l}+(NF+N+1)\frac{\ln(l+1)}{N-l}\right]+(\varepsilon+1)\sum\limits_{l=N/2+1}^{N-\varepsilon-1}\frac{\ln l}{N-l}.
\end{align}
and
\begin{align}
    B_3=&\frac{N}{2}\sum\limits_{l=\varepsilon+1}^{N/2}\bigg[(NF+N-1)\frac{\ln(N-l-1)}{l}+(NF+1)\frac{\ln(l+1)}{l}\nonumber\\
    &-(NF-1)\frac{\ln(N-l-1)}{N-l}-(NF+N+1)\frac{\ln(l+1)}{N-l}+\frac{\ln l}{N-l}\bigg].
\end{align}

Bring Eqs.~\eqref{eq: A1}, \eqref{eq: A2} and \eqref{eq: A3} into Eq.~\eqref{eq: near}, we obtain for $\alpha=0$,
\begin{align}
    t_{\frac{N}{2}}^{\varepsilon\vee N-\varepsilon}\bigg|_{\alpha=0}=&\frac{1}{2}(A_1+A_2-A_3)\bigg|_{\alpha=0}\nonumber\\
    =&\left(\frac{M}{2}-1\right)\frac{N}{2}\left[\ln(N-\varepsilon-1)-\ln(\varepsilon+1)\right]\nonumber\\
    &+\!\frac{N}{2}(M\!+\!\varepsilon\!-\!1)\left\{\ln(M\!+\!\varepsilon\!-\!1)\!-\!\ln(\varepsilon\!+\!1)\!-\!\left[\ln(\frac{N}{2}\!+\!1)\!-\!\ln(\frac{N}{2}\!-\!1)\!\right]\!\right\}\!+\!\frac{N^2}{2}\!\left[\ln(\varepsilon\!+\!1)\!-\!\ln(\frac{N}{2}\!-\!1)\!\right]\nonumber\\
    &-\left\{\frac{N^2}{4}\left[\ln(N-\varepsilon-1)-\ln(\varepsilon+1)\right]+\frac{N^2}{2}\left[\ln\frac{N}{2}-\ln(N-\varepsilon-1)\right]\right\}\nonumber\\
    \approx& \  \ln 2\cdot N^2+N\big[\underbrace{(N-\varepsilon-1)\ln(N-\varepsilon-1)-N\ln N}_{C_1}+(\varepsilon+1)\ln(\varepsilon+1)\big]\label{eq: near1}\\
    =&\  \ln 2\cdot N^2+\mathcal{O}(N\ln{N})\label{eq: near2}
\end{align}
where $C_1=\mathcal{O}(\ln{N})$ in Eq.~\eqref{eq: near1} based on Lagrange's Mean Value Theorem. 
Thus $t_{\frac{N}{2}}^{\varepsilon\vee N-\varepsilon}(0,0,0)=\ln{2}\cdot N^2$ is equal to $t_{N/2}(0,0,0)$ for large $N$, which again implies that it's correct to use the near-consensus time $t_{\frac{N}{2}}^{\varepsilon\vee N-\varepsilon}$ to estimate the consensus time $t_{\frac{N}{2}}$.

Bring the partial derivatives of the three components $A_1$, $A_2$ and $A_3$ into Eq.~\eqref{eq: near}, we obtain
\begin{align}
    \partial_\alpha t_{\frac{N}{2}}^{\varepsilon\vee N-\varepsilon}(0,0,0)=&\frac{1}{2}\frac{\partial}{\partial\alpha}(A_1+A_2-A_3)\bigg|_{\alpha=0}\nonumber\\
    =&\frac{1}{2}\left\{N(B_1+B_2-B_3)+N^3\ln N\left[4F-\ln(\varepsilon+1)-2\ln 2-4F\ln 2\right]\right\}+\mathcal{O}(N^3).\label{eq: paaa}
\end{align}
Then we compute $B_1+B_2-B_3$, which is given by
\begin{align}
    &B_1+B_2-B_3\nonumber\\
    =&\left(\!\frac{M}{2}\!-\!1\!\right)\!\sum\limits_{l=\varepsilon+1}^{N/2}\left[(NF\!+\!N\!-\!1)\frac{\ln(N\!-\!l\!-\!1)}{l}\!+\!(NF\!+\!1)\frac{\ln(l\!+\!1)}{l}\!
    +\!(NF\!-\!1)\frac{\ln(N\!-\!l\!-\!1)}{N\!-\!l}\!+\!(NF\!+\!N\!+\!1)\frac{\ln(l\!+\!1)}{N\!-\!l}\!\right]\nonumber\\
    &+(N-\varepsilon+1)\sum\limits_{l=N/2+1}^{N-\varepsilon-1}\left[(NF+N-1)\frac{\ln(N-l-1)}{l}+(NF+1)\frac{\ln(l+1)}{l}\right]\nonumber\\
    &-(\varepsilon+1)\sum\limits_{l=N/2+1}^{N-\varepsilon-1}\left[(NF-1)\frac{\ln(N-l-1)}{N-l}+(NF+N+1)\frac{\ln(l+1)}{N-l}\right]-\frac{N}{2}\cdot\nonumber\\
    &\sum\limits_{l=\varepsilon+1}^{N/2}\left[(NF\!+\!N\!-\!1)\frac{\ln(N\!-\!l\!-\!1)}{l}\!+\!(NF\!+\!1)\frac{\ln(l\!+\!1)}{l}
    \!-\!(NF\!-\!1)\frac{\ln(N\!-\!l\!-\!1)}{N\!-\!l}\!-\!(NF\!+\!N\!+\!1)\frac{\ln(l\!+\!1)}{N\!-\!l}\right]\!+\!\mathcal{O}(N^2)\nonumber\\
    =&(\!N\!-\!\varepsilon\!+\!1)\!\left\{\sum\limits_{l=N/2+1}^{N-\varepsilon-1}\left[(NF\!+\!N)\frac{\ln(N\!-\!l\!-\!1)}{l}\!+\!NF\frac{\ln(l\!+\!1)}{l}\right]\!+\!\sum\limits_{l=\varepsilon+1}^{N/2}\left[NF\frac{\ln(N\!-\!l)}{N\!-\!l}\!+\!(NF\!+\!N)\frac{\ln l}{N\!-\!l}\right]\!+\!\mathcal{O}(N^2)\!\right\}\nonumber\\
    =&2(N-\varepsilon+1)NF\underbrace{\sum\limits_{l=N/2+1}^{N-\varepsilon-1}\frac{\ln l}{l}}_{C_2}+2(N-\varepsilon+1)(NF+N)\underbrace{\sum\limits_{l=N/2+1}^{N-\varepsilon-1}\frac{\ln(N-l)}{l}}_{C_3}+\mathcal{O}(N^2),\label{eq: BBB}
\end{align}
where
\begin{align}
    C_2\approx&\left[\int_a^b\frac{\ln x}{x}dx\right]\bigg|_{a=\frac{N}{2}+1,b=N-\varepsilon-1}\nonumber\\
    =&\frac{1}{2}\left[\ln^2(N-\varepsilon-1)-\ln^2\frac{N}{2}\right]\nonumber\\
    =&\frac{1}{2}\left[\ln^2N-\ln^2\frac{N}{2}\right]+\mathcal{O}(\frac{\ln N}{N})\nonumber\\
    =&\ln 2\cdot\ln N+\mathcal{O}(1)
\end{align}
and 
\begin{align}
    C_3\approx&\left[\int_a^b\frac{\ln(N-x)}{x}dx\right]\bigg|_{a=\frac{N}{2}+1,b=N-\varepsilon-1}\nonumber\\
    =&\left[\int_a^b\frac{1}{x}\left(\ln N-\sum\limits_{n=1}^\infty \frac{x^n}{nN^n}\right)dx\right]\bigg|_{a=\frac{N}{2}+1,b=N-\varepsilon-1}\nonumber\\
    =&\left[\int_a^b\left(\frac{\ln N}{x}-\sum\limits_{n=1}^\infty \frac{x^{n-1}}{nN^n}\right)dx\right]\bigg|_{a=\frac{N}{2}+1,b=N-\varepsilon-1}\nonumber\\
    =&\left[\ln N\int_a^b\frac{dx}{x}-\sum\limits_{n=1}^\infty \frac{\int_a^b x^{n-1}dx}{nN^n}\right]\bigg|_{a=\frac{N}{2}+1,b=N-\varepsilon-1}\nonumber\\
    =&\left[\ln N\left(\ln b-\ln a\right)-\sum\limits_{n=1}^\infty\frac{b^n-a^n}{n^2N^n}\right]\bigg|_{a=\frac{N}{2}+1,b=N-\varepsilon-1}\label{eq: C3}.
\end{align}
Based on the dilogarithm function $Li_2(x)=\sum\limits_{n=1}^\infty\frac{x^k}{k^2}$, we obtain
\begin{align}
    C_3=&\left\{\ln N\left(\ln b-\ln a\right)-\left[Li_2\left(\frac{b}{N}\right)-Li_2\left(\frac{a}{N}\right)\right]\right\}\bigg|_{a=\frac{N}{2}+1,b=N-\varepsilon-1}\nonumber\\
    =&\ln N\left[\ln(N-\varepsilon-1)-\ln\left(\frac{N}{2}+1\right)\right]-\left[Li_2\left(\frac{N-\varepsilon-1}{N}\right)-Li_2\left(\frac{1}{2}+\frac{1}{N}\right)\right]\nonumber\\
    =&\ln N\left(\ln N-\ln\frac{N}{2}\right)-\left[Li_2\left(1\right)-Li_2\left(\frac{1}{2}\right)\right]+\mathcal{O}(\frac{\ln N}{N})\nonumber\\
    =&\ln 2\cdot\ln N-\left[\frac{\pi^2}{6}-\left(\frac{\pi^2}{12}-\frac{\ln^22}{2}\right)\right]+\mathcal{O}(\frac{\ln N}{N})\nonumber\\
    =&\ln 2\cdot\ln N+\mathcal{O}(1).
\end{align}
Then 
\begin{align}
    B_1+B_2-B_3=N^2\ln{N}(4F+2)\ln{2}.\label{eq: bbb}
\end{align}
Combined with Eq. \eqref{eq: paaa} and Eq.~\eqref{eq: bbb}, we obtain
\begin{align}
    \partial_\alpha t_{\frac{N}{2}}^{\epsilon\vee N-\varepsilon}(0,0,0)=N^3\ln{N}\cdot\underbrace{\left[2F-\frac{\ln{(\varepsilon+1)}}{2}\right]}_{\Lambda_\alpha}+\mathcal{O}(N^3),\label{eq: Lambda_alpha}
\end{align}
where $\Lambda_\alpha<0$ holds since $\bar{k}\ge 3.75$ and $\varepsilon\ge 1$.
The above calculations use several times the Lagrange's Mean Value Theorem and $\sum\limits_{n=i}^j\dots\approx\int_{i}^j\dots dn$.

\section{Calculation of $\partial_\delta t_{N/2}(0,0,0)$}\label{appendix: t_delta}
We let $\theta=\delta/k_0$ and obtain  $p_B=x_A+\theta\cdot x_Ax_B+o(\theta)$.  Since $\theta\to 0$, the transition probability from $j$ to $j+1$ is approximated by 
\begin{align}
    T_j^+=x_Ax_B+\theta\cdot x_Ax_B^2.\label{eq: a_tran_delta-}
\end{align}
Similarly, the transition probability from $j$ to $j-1$ is approximated by 
\begin{align}
    T_j^-=x_Ax_B+\theta\cdot x_A^2x_B.
\end{align}
Thus
\begin{align}
    \gamma_j=\frac{T_j^-}{T_j^+}=1-\frac{2j-N}{N}\theta.
\end{align}
We compute the consensus time $t_{N/2}$ based on Eq.~\eqref{eq: t2}. Since the sums $\sum\limits_{n=i}^j\dots$ can be approximated by the integral $\int_{i}^j\dots dn$, we obtain
\begin{align}
    \frac{1}{T_l^+}\prod\limits_{j=l+1}^k\gamma_j=\frac{N^2}{l(N-l)}+\theta\underbrace{\left\{\frac{N}{l}+\left[(k-l)-\frac{(k-l)(k+l+1)}{N}\right]\frac{N^2}{l(N-l)}\right\}}_{\partial_\theta \left(\frac{1}{T_l^+}\prod\limits_{j=l+1}^k\gamma_j\right)\bigg|_{\theta=0}}.\nonumber
\end{align}
Thus we obtain 
\begin{align}
    &\partial_\theta\left(\sum\limits_{k=\frac{N}{2}}^{N-1}\sum\limits_{l=1}^k\frac{1}{T_l^+}\prod\limits_{m=l+1}^k\gamma_m\right)\Bigg|_{\theta=0}\nonumber\\
    =&\sum\limits_{k=\frac{N}{2}}^{N-1}\left\{(k+1)(N-k)\left[\ln(N-1)-\ln(N-k)+\ln k\right]-Nk\right\}\nonumber\\
    =&\left[\frac{N(N+1)(N+2)}{12}\ln(N-1)\right]-\left[\left(\frac{N}{3}-\frac{1}{2}\right)\frac{N^2}{4}\ln\frac{N}{2}-\left(\frac{7N}{36}+\frac{1}{4}\right)\frac{N^2}{4}+\frac{N}{4}+\frac{5}{36}\right]-\left[\frac{N^2(3N-2)}{8}\right]\nonumber\\
    &+\left[\!\frac{(N\!-\!1)^3}{6}\ln(N\!-\!1)\!-\!\frac{5}{36}(N\!-\!1)^3\!-\!\left(\!\frac{N}{3}\!+\!\frac{3}{2}\!\right)\frac{N^2}{4}\ln\frac{N}{2}\!+\!\left(\!\frac{7N}{36}\!+\!\frac{7}{4}\!\right)\frac{N^2}{4}\!+\!N(N\!-\!1)\ln(N\!-\!1)\!-\!N(N\!-\!1)\!\right]\nonumber\\
    =&\frac{N^3}{4}\ln(N-1)-\frac{N^3}{6}\ln N+\frac{9N^2-4N-2}{12}\ln(N-1)+\frac{\ln 2}{6}N^3-\frac{5}{12}N^3-\frac{N^2}{2}\ln N+\frac{\ln 2}{2}N^2+\frac{N^2}{6}+\frac{N}{3}\nonumber
\end{align}
and
\begin{align}
    \partial_\theta\left(\sum\limits_{k=1}^{N-1}\sum\limits_{l=1}^k\frac{1}{T_l^+}\prod\limits_{m=l+1}^k\gamma_m\right)\bigg|_{\theta=0}=\frac{N^3+3N^2+2N-6}{6}\ln(N-1)-\frac{N^3}{2}+\frac{N^2}{2}+N.\nonumber
\end{align}
Based on Eq.~\eqref{eq: t2}, the partial derivative of consensus time $t_{N/2}$ with respect to $\delta$ is given by
\begin{align}
    \partial_\delta t_{\frac{N}{2}}(0,0,0)=&\frac{1}{k_0}\left[\partial_\theta\left(\sum\limits_{k=\frac{N}{2}}^{N-1}\sum\limits_{l=1}^k\frac{1}{T_l^+}\prod\limits_{m=l+1}^k\gamma_m\right)-\frac{1}{2}\partial_\theta\left(\sum\limits_{k=1}^{N-1}\sum\limits_{l=1}^k\frac{1}{T_l^+}\prod\limits_{m=l+1}^k\gamma_m\right)\right]\nonumber\\
    =&\frac{1}{k_0}\bigg\{\!\frac{N^3}{6}\left[\ln(N\!-\!1)\!-\!\ln N\right]\!+\!\frac{\ln 2\!-\!1}{6}N^3\!+\!\frac{3N^2\!-\!3N\!+\!2}{6}\ln(N\!-\!1)\!-\!\frac{N^2}{2}\!\ln N\!+\!\left(\!\frac{\ln 2}{2}\!-\!\frac{1}{12}\!\right)N^2\!-\!\frac{N}{6}\!\bigg\}\nonumber\\
    =&N^3\underbrace{\frac{\ln 2-1}{6k_0}}_{\Lambda_\delta}+\mathcal{O}(N^2\ln{N}).\label{eq: Lambda_delta}
\end{align}

\section{The perturbation from $\beta\to 0$}\label{appendix: perturbation_beta}

For $\delta=0$, the average degree of opinion $A$ and that of $B$ are the same.
We denote the degree distribution of an arbitrary individual with opinion $X$ by $\pi^X$, where $X\in\{A, B\}$.
It can be expressed by $\pi^X=Poisson(\Bar{k})+\mathcal{O}(\beta)$, since the weak Matthew effect leads to a slight deviation in the degree distribution.

Based on \cref{remark: binomial}, for $\delta=0$, the opinion distribution in the neighborhood with a given size is the Binomial distribution because it's equal to the stationary distribution of $\{Y(t)|X(t)\equiv k\}$.

With the above two properties,  we are able to address how the weak Matthew effect influences the opinion distribution in the neighborhood of Adam, given he holds opinion $A$ and has a fixed neighbor size. More formally, we are addressing how the weak Matthew effect influences the Binomial distribution.
We only need to focus on Eqs.~(\ref{eq: T3},\ref{eq: T4}) since they are the driving force.
Taking the Matthew effect into account, Eq.~\eqref{eq: T3} becomes
\begin{align}
    P_{0,+1}(i,j)=\frac{i-j}{L}\cdot k_{d}\cdot\frac{1}{2}\cdot \frac{\sum\limits_{p\in G(A)}k_p^\beta}{\sum\limits_{p\in G(A)}k_p^\beta+\sum\limits_{q\in G(B)}k_q^\beta},
\end{align}
where $G(X)$ is the set containing all $X-$individuals and $k_p$ is the degree of individual $p$. 
In contrast to the previous expression Eq.~\eqref{eq: T3}, the only change is the probability of rewiring to an $A-$ individual.
For the weak Matthew effect, it becomes
\begin{align}
    \frac{\sum\limits_{p\in G(A)}k_p^\beta}{\sum\limits_{p\in G(A)}k_p^\beta+\sum\limits_{q\in G(B)}k_q^\beta}=&\frac{Nx_A+\beta\cdot\sum\limits_{p\in G(A)}\ln{k_p}+\mathcal{O}(\beta^2)}{N+\beta\cdot\left(\sum\limits_{p\in G(A)}\ln{k_p}+\sum\limits_{q\in G(B)}\ln{k_q}\right)+\mathcal{O}(\beta^2)}\nonumber\\
    =&x_A+\frac{\beta}{N}\cdot\left[\sum\limits_{p\in G(A)}\ln{k_p}-x_A\sum\limits_{p\in G(A)}\ln{k_p}-x_A\sum\limits_{q\in G(B)}\ln{k_q}\right]+\mathcal{O}(\beta^2)\nonumber\\
    =&x_A+\frac{\beta}{N}\cdot\left[x_B\sum\limits_{p\in G(A)}\ln{k_p}-x_A\sum\limits_{q\in G(B)}\ln{k_q}\right]+\mathcal{O}(\beta^2).\label{eq: x_A_m2}
\end{align}
Here $\sum\limits_{p\in G(A)}\ln{k_p}=Nx_A\mathbb{E}_{\pi^A}\left[\ln{X}\right]=Nx_A\mathbb{E}_{Poisson(\Bar{k})}\left[\ln{X}\right]+\mathcal{O}(\beta)$. 
Thus we obtain
\begin{align}
    \frac{\sum\limits_{p\in G(A)}k_p^\beta}{\sum\limits_{p\in G(A)}k_p^\beta+\sum\limits_{q\in G(B)}k_q^\beta}=&x_A+\beta\cdot x_Ax_B\left\{\mathbb{E}_{Poisson(\Bar{k})}\left[\ln{X}\right]+\mathcal{O}(\beta)-\mathbb{E}_{Poisson(\Bar{k})}\left[\ln{X}\right]-\mathcal{O}(\beta)\right\}+\mathcal{O}(\beta^2)\nonumber\\
    =&x_A+\beta\cdot x_Ax_B\cdot \mathcal{O}(\beta)+\mathcal{O}(\beta^2)\nonumber\\
    =&x_A+\mathcal{O}(\beta^2).\label{eq: rewire_m}
\end{align}
The Eq.~\eqref{eq: rewire_m} is of $\mathcal{O}(\beta^2)$, i.e., the deviations in Eqs.~(\ref{eq: T3}, \ref{eq: T4}) from weak $\beta$ are of $\mathcal{O}(\beta^2)$. 
Thus, the weak Matthew effect isn't enough to destroy the opinion distribution among the neighborhood (i.e., the Binomial distribution).

\section*{Acknowledgments}
We would like to acknowledge the assistance of volunteers in putting
together this example manuscript and supplement.

\bibliographystyle{siamplain}
\bibliography{references}
\end{document}